\newtheorem{theorem}{Theorem}
\newtheorem{lemma}[theorem]{Lemma}
\newtheorem{definition}{Definition}
\newtheorem{remark}{Remark}
\pgfplotsset{compat=newest}
\pgfplotsset{plot coordinates/math parser=false}
\pgfplotsset{every axis/.append style={font=\footnotesize}}
\pgfplotsset{
    ylabel right/.style={
        after end axis/.append code={
            \node [rotate=90, anchor=north] at (rel axis cs:1,0.5) {#1};
        }   
    }
}
\newlength\figureheight
\newlength\figurewidth
\newlength\subgraphheight
\newlength\subgraphwidth
\newcommand{\nodedistance}{\tikz@node@distance}
	\saveddimen{\halfheight}{
		\pgfmathsetlength{\pgf@xa}{\pgfkeysvalueof{/pgf/minimum height}/2}
		\pgfmathsetlength{\pgf@xb}{\pgfkeysvalueof{/pgf/outer ysep}}
		\advance\pgf@xa by \pgf@xb
		\pgf@x=\pgf@xa
	}
	\saveddimen{\halfwidth}{
		\pgfmathsetlength{\pgf@xa}{\pgfkeysvalueof{/pgf/minimum width}/2}
		\pgfmathsetlength{\pgf@xb}{\pgfkeysvalueof{/pgf/outer xsep}}
		\advance\pgf@xa by \pgf@xb
		\pgf@x=\pgf@xa
	}
	\saveddimen{\halfheight}{
		\pgfmathsetlength{\pgf@xa}{\pgfkeysvalueof{/pgf/minimum height}/2}
		\pgfmathsetlength{\pgf@xb}{\pgfkeysvalueof{/pgf/outer ysep}}
		\advance\pgf@xa by \pgf@xb
		\pgf@x=\pgf@xa
	}
	\saveddimen{\halfwidth}{
		\pgfmathsetlength{\pgf@xa}{\pgfkeysvalueof{/pgf/minimum width}/2}
		\pgfmathsetlength{\pgf@xb}{\pgfkeysvalueof{/pgf/outer xsep}}
		\advance\pgf@xa by \pgf@xb
		\pgf@x=\pgf@xa
	}
	\saveddimen{\halfheight}{
		\pgfmathsetlength{\pgf@xa}{\pgfkeysvalueof{/pgf/minimum height}/2}
		\pgfmathsetlength{\pgf@xb}{\pgfkeysvalueof{/pgf/outer ysep}}
		\advance\pgf@xa by \pgf@xb
		\pgf@x=\pgf@xa
	}
	\saveddimen{\halfwidth}{
		\pgfmathsetlength{\pgf@xa}{\pgfkeysvalueof{/pgf/minimum width}/2}
		\pgfmathsetlength{\pgf@xb}{\pgfkeysvalueof{/pgf/outer xsep}}
		\advance\pgf@xa by \pgf@xb
		\pgf@x=\pgf@xa
	}
	\saveddimen{\halfheight}{
		\pgfmathsetlength{\pgf@xa}{\pgfkeysvalueof{/pgf/minimum height}/2}
		\pgfmathsetlength{\pgf@xb}{\pgfkeysvalueof{/pgf/outer ysep}}
		\advance\pgf@xa by \pgf@xb
		\pgf@x=\pgf@xa
	}
	\saveddimen{\halfwidth}{
		\pgfmathsetlength{\pgf@xa}{\pgfkeysvalueof{/pgf/minimum width}/2}
		\pgfmathsetlength{\pgf@xb}{\pgfkeysvalueof{/pgf/outer xsep}}
		\advance\pgf@xa by \pgf@xb
		\pgf@x=\pgf@xa
	}
\pgfplotsset{compat=newest}
\pgfplotsset{plot coordinates/math parser=false}
\DeclareMathOperator{\sinc}{sinc}
\DeclareMathAlphabet{\mathbit}{OML}{cmr}{bx}{it}
\DeclareMathAlphabet{\mathsf}{OT1}{cmss}{m}{n}
\DeclareMathAlphabet{\mathbsf}{OT1}{cmss}{bx}{it}
\newcommand{\inset}[2]{\ensuremath{\in \left\{#1,\ldots,#2\right\}}}
\newcommand{\Real}[1]{\ensuremath{\Re\left\{#1\right\}}}
\newcommand{\Imag}[1]{\ensuremath{\Im\left\{#1\right\}}}
\newcommand\diff[1]{\ensuremath{\:\mathrm{d}#1}}
\newcommand\deriv[2]{\ensuremath{\frac{\mathrm{d}#1}{\mathrm{d}#2}}}
\newcommand\pderiv[2]{\ensuremath{\frac{\partial#1}{\partial#2}}}
\newcommand\pderivk[3]{\ensuremath{\frac{\partial^{#3}#1}{\partial{#2}^{#3}}}}
\newcommand{\Deltarm}{\ensuremath{\mathrm{\Delta}}}
\def\ps@IEEEtitlepagestyle{%
	\def\@oddfoot{\mycopyrightnotice}%
	\def\@evenfoot{}%
}
\def\mycopyrightnotice{%
	{\begin{minipage}{\textwidth}\centering\footnotesize This work has been accepted for publication in IEEE/OSA Journal of Lightwave Technology. Digital Object Identifier: 10.1109/JLT.2017.2785316 \\ Copyright 0733-8724 \textcopyright 2017 IEEE. Personal use of this material is permitted. Permission from IEEE must be obtained for all other uses, in any current or future media, including reprinting/republishing this material for advertising or promotional purposes, creating new collective works, for resale or redistribution to servers or lists, or reuse of any copyrighted component of this work in other works.	\end{minipage}}
	\gdef\mycopyrightnotice{}
}
\begin{document}
%
\title{Energy Conservation in Optical Fibers with Distributed Brick-Walls Filters}

\author{\IEEEauthorblockN{Javier Garc\'ia, Hassan Ghozlan, \IEEEmembership{Member,~IEEE}, and Gerhard Kramer, \IEEEmembership{Fellow,~IEEE}}
	
\IEEEcompsocitemizethanks{
\IEEEcompsocthanksitem Date of current version \today. J. Garc\'ia was supported by the German Research Foundation under Grant KR 3517/8-1. H. Ghozlan was supported in part by a USC Annenberg Fellowship and in part by the National Science Foundation under Grant CCF- 09-05235. G. Kramer was supported by an Alexander von Humboldt Professorship endowed by the German Federal Ministry of Education and Research.	
\IEEEcompsocthanksitem J. Garc\'ia and G. Kramer are with the Institute for Communications Engineering (LNT), Technical University of Munich, 80333 Munich, Germany.
\IEEEcompsocthanksitem H. Ghozlan was with the Department of Electrical Engineering, University of Southern California, Los Angeles, CA 90089, USA. He is now with Intel Corporation, Hillsboro, OR 97124, USA.
}
}


%


\maketitle

\begin{abstract}
A band-pass filtering scheme is proposed to mitigate spectral broadening and channel coupling in the Nonlinear Schr\"odinger (NLS) fiber optic channel. The scheme is modeled by modifying the NLS Equation to include an attenuation profile with multiple brick-wall filters centered at different frequencies. It is shown that this brick-walls profile conserves the total in-band energy of the launch signal. Furthermore, energy fluctuations between the filtered channels are characterized, and conditions on the channel spacings are derived that ensure energy conservation in each channel. The maximum spectral efficiency of such a system is derived, and a constructive rule for achieving it using Sidon sequences is provided.
 
\end{abstract}

\begin{IEEEkeywords}
	Optical Kerr effect, Optical coupling, Energy conservation, Bandlimited communication
\end{IEEEkeywords}

%
\IEEEpeerreviewmaketitle

\section{Introduction}\label{sec:intro}

The Nonlinear Schr\"odinger Equation (NLSE), which governs waveform propagation in fiber optic channels, exhibits signal-noise mixing and channel coupling %
that limit the achievable communication rates~\cite{essiambre_capacity}. In this work, we consider a modified NLSE with a distributed brick-walls filter that controls spectral broadening. In practical terms, this channel is a limiting case of a system that includes band-pass filters at regular spacings along the fiber, and when the filter spacing tends to zero. 
We prove that this band-limited system is energy-preserving. We further characterize the energy fluctuations between the filtered channels and show that avoiding Four Wave Mixing (FWM) conserves the per-channel energy. More precisely, we place the channels using Sidon sequences, as done in~\cite{forghieri_nofwm}, and we derive an asymptotic upper bound on the spectral efficiency.

The per-channel energy conservation suggests that filters placed at close spacings, or fibers with high attenuation outside the bands of interest, might increase communication rates at high powers. In fact, the lack of interference might allow pulse amplitude modulation rates to grow with launch power, avoiding the peak reported in~\cite{essiambre_capacity}. However, our results do not prove this because we measure energy by integrating over all time, and this prevents us from characterizing temporal broadening due to dispersion. We currently lack the analysis tools for analyzing energy over finite time intervals.

This paper is organized as follows. Section~\ref{sec:channel} describes the channel model and the brick-walls filters. Section~\ref{sec:conservation} proves total energy conservation, both in the frequency and time domains. Section~\ref{sec:coupling} characterizes the energy fluctuations between channels and gives a constructive rule for designing systems with per-channel energy conservation. Section~\ref{sec:conclusion} concludes the paper and gives directions for future work.

\section{System Model}\label{sec:channel}
Consider the slowly varying component $q(z, t)$ of an electrical field propagating along an optical fiber, where $z$ is distance and $t$ is time. The field obeys the Nonlinear Schr\"odinger equation (NLSE), which is expressed as~\cite[Eq. (2.3.46)]{agrawal_nfo}:
\begin{align}
\pderiv{}{z}q(z, t)=&-\frac{\alpha}{2} q(z, t)-j\frac{\beta_2}{2}\pderivk{}{t}{2}q(z, t) \nonumber\\
&+j\gamma\left|q(z, t)\right|^2 q(z, t)
\label{eq:nlse}
\end{align}
where $\alpha$ is the attenuation coefficient, $\beta_2$ is the dispersion coefficient, and $\gamma$ is the nonlinear coefficient. Taking the Fourier transform of \eqref{eq:nlse}, we obtain
\begin{multline}
\pderiv{}{z}Q(z, \omega)=\left(-\frac{\alpha(\omega)}{2}+j\frac{\beta_2}{2}\omega^2\right)Q(z, \omega)+j\frac{\gamma}{4\pi^2}\\\cdot\int\limits_{-\infty}^{\infty}\int\limits_{-\infty}^{\infty}Q(z, \omega_1)Q^*(z, \omega_2)Q(z, \omega-\omega_1+\omega_2)\diff{\omega_2}\diff{\omega_1}
\label{eq:nlse_f}
\end{multline}
where $\omega$ is the angular frequency, $Q^*$ denotes the complex conjugate of $Q$, and
\begin{equation}
Q(z, \omega)=\int\limits_{-\infty}^{\infty}q(z, t)e^{-j\omega t}\diff{t}.
\end{equation}
To obtain~\eqref{eq:nlse_f}, we used the standard result that a product of signals in the time domain becomes a convolution in the frequency domain. The convolution of two signals is
\begin{equation}
f(t)* g(t)= \int\limits_{-\infty}^{\infty} f(\tau)g(t-\tau)\diff\tau.
\end{equation}

The nonlinear term $\gamma$ causes variations in the spectral occupancy of signals. In a system with band-pass filters at discrete positions along the fiber, every filter removes part of the energy of the signal. One idea to mitigate the loss is to make the spacing $\Deltarm z$ between the filters small. We prove in Section~\ref{sec:conservation} that, in the limit when $\Deltarm z\to 0$,  \textit{the energy in the passband  is preserved as the signal propagates along the fiber}. An intuitive explanation is that, in a system with lumped filters at spacings $\Deltarm z$, the out-of-band energy produced by the nonlinearity grows in proportion to $\left(\Deltarm z\right)^2\;[\mathrm{J}]$. The energy loss per unit distance is therefore linear in $\Deltarm z\;[\mathrm{J}/\mathrm{m}]$, and goes to 0 as $\Deltarm z\to 0$. 

In~\eqref{eq:nlse_f}, we have allowed for a frequency-dependent attenuation profile $\alpha(\omega)$.  One way to model the distributed filtering is to design $\alpha(\omega)$ to be small in the bands of interest, and very large outside them. In the ideal case, we have
\begin{equation}
\alpha(\omega)=\begin{cases}
\alpha_0, & \omega\in\mathcal{S};\\
\infty, & \mathrm{otherwise}
\end{cases}
\label{eq:alpha}
\end{equation}
where $\mathcal{S}$ denotes a specified set of angular frequencies. We call~\eqref{eq:alpha} a \textit{brick-walls attenuation profile}. The corresponding \textit{brick-walls filter}  for $\alpha_0=0$ is
\begin{equation}
H_{\mathcal{S}}(\omega)=\begin{cases}
1, & \omega\in\mathcal{S};\\
0, & \mathrm{otherwise}.
\end{cases}
\label{eq:H}
\end{equation}
For example,  consider a generic WDM system with $N$ channels. For each channel $\mathcal{W}_n$, $n\inset{1}{N}$, let $\overline{\omega}_n$ be its center frequency, and $W_n$ its width. The band of interest is
\begin{equation}
\mathcal{W}=\bigcup\limits_{n=1}^{N} \mathcal{W}_n
\label{eq:W}
\end{equation}
where
\begin{equation}
\mathcal{W}_n=\left\{\omega : \left|\omega-\overline{\omega}_n\right|\le \frac{W_n}{2} \right\}
\end{equation}
and
\begin{equation}
\mathcal{W}_m\cap\mathcal{W}_n=\emptyset,\quad\mathrm{if}\ m\neq n.
\end{equation}
The channels are thus pairwise disjoint frequency intervals. The brick-walls filter $H(\omega)\triangleq H_{\mathcal{W}}(\omega)$ for $\alpha_0=0$ has an impulse response given by
\begin{equation}
h(t)=\sum\limits_{n=1}^{N}\frac{W_n}{2\pi}e^{j\overline{\omega}_n t}\mathrm{sinc}\left(\frac{W_n}{2\pi}t\right)
\label{eq:h}
\end{equation}
where $\mathrm{sinc}(x)\triangleq \sin(\pi x)/(\pi x)$.
\vspace{-0.45em}

\section{Energy conservation}\label{sec:conservation}

We prove energy conservation in the frequency domain in Section~\ref{sec:conservation_freq}, and in the time domain in Section~\ref{sec:conservation_time}. We write the energy in a frequency band $\mathcal{S}$ as
\begin{equation}
E_{\mathcal{S}}(z)\triangleq \frac{1}{2\pi}\int_{\mathcal{S}}\left|Q(z, \omega)\right|^2\diff{\omega}.
\label{eq:energy_band}
\end{equation}
We write $q(z, t)\in L^2$ (or $Q(z, \omega)\in L^2$) if the signal energy $E(z)=E_{\mathbb{R}}(z)$ exists in the Lebesgue sense and $E(z)<\infty$. Let $\Real{x}$ and $\Imag{x}$ be the real and imaginary parts of $x$, respectively. 

\subsection{Energy Conservation in the Frequency Domain}\label{sec:conservation_freq} 

We characterize the energy evolution for a bounded $\alpha(\omega)$.
\begin{lemma}
	\label{th:E_total_f}
	Let $Q\left(z, \omega\right)\in L^2$ be a signal propagating according to~\eqref{eq:nlse_f}, with attenuation profile $\alpha(\omega)<\infty$. The signal energy $E(z)$ evolves in $z$ according to
	\begin{equation}
	\deriv{}{z}E(z)\triangleq-\frac{1}{2\pi}\int\limits_{-\infty}^{\infty}\alpha(\omega)\left|Q(z, \omega)\right|^2\diff{\omega}.
	\label{eq:energy_evolution}
	\end{equation}
\end{lemma}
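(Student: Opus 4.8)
The plan is to differentiate the energy $E(z)=\frac{1}{2\pi}\int_{-\infty}^{\infty}\left|Q(z,\omega)\right|^2\diff{\omega}$ under the integral sign and substitute the evolution equation~\eqref{eq:nlse_f} for $\pderiv{}{z}Q$. Writing $\pderiv{}{z}\left|Q\right|^2 = 2\Real{Q^*\,\pderiv{}{z}Q}$ (equivalently, expanding $Q=\Real{Q}+j\Imag{Q}$ and differentiating each part separately), the integrand splits into three contributions, one for each term on the right-hand side of~\eqref{eq:nlse_f}. I expect the attenuation term to produce exactly $-\alpha(\omega)\left|Q(z,\omega)\right|^2$, because its coefficient $-\alpha(\omega)/2$ is real, so $2\Real{Q^*\cdot(-\tfrac{\alpha(\omega)}{2})Q}=-\alpha(\omega)\left|Q\right|^2$ pointwise; and I expect the dispersion and nonlinear terms to integrate to zero. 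Collecting the surviving term over $\omega$ then gives~\eqref{eq:energy_evolution}.

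The dispersion term is immediate: its coefficient $j\beta_2\omega^2/2$ is purely imaginary, so $\Real{Q^*\cdot j\tfrac{\beta_2}{2}\omega^2 Q}=\tfrac{\beta_2}{2}\omega^2\,\Real{j\left|Q\right|^2}=0$ for every $\omega$. The nonlinear term needs a little more. After pulling out the real constant $\gamma/(4\pi^2)$ and the factor $j$, its contribution to $\deriv{}{z}E(z)$ is proportional to $\Real{j\,I}$, where
\[
I=\int_{-\infty}^{\infty}\!\!\int_{-\infty}^{\infty}\!\!\int_{-\infty}^{\infty} Q^*(z,\omega)\,Q(z,\omega_1)\,Q^*(z,\omega_2)\,Q(z,\omega-\omega_1+\omega_2)\diff{\omega_2}\diff{\omega_1}\diff{\omega},
\]
so it suffices to show that $I$ is real, since then $\Real{j\,I}=-\Imag{I}=0$. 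I would establish $I\in\mathbb{R}$ by either of two equivalent routes: (i) take the complex conjugate $\overline{I}$, change variables $\omega\mapsto\omega+\omega_1-\omega_2$, and relabel the dummy variables $\omega_1\leftrightarrow\omega_2$, which returns $\overline{I}=I$; or (ii) invoke Parseval's relation to recognize $I$ as a positive multiple of $\int_{-\infty}^{\infty}\left|q(z,t)\right|^4\diff{t}$, which is manifestly real and nonnegative (this mirrors the time-domain fact $\Real{q^*\cdot j\gamma\left|q\right|^2 q}=0$).

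The main obstacle is analytic rather than algebraic: justifying the interchange of $\deriv{}{z}$ with $\int\diff{\omega}$, and ensuring the iterated convolution integral defining $I$ converges absolutely so that Fubini's theorem and the change of variables in route~(i) are legitimate. For a general $Q(z,\cdot)\in L^2$ this requires extra integrability of $q(z,\cdot)$ beyond $L^2$ together with enough regularity in $z$; I would either assume such regularity of the solution throughout, or present the energy computation as formal and remark that it is rigorous under standard decay and smoothness hypotheses on the launch signal. Modulo this technical point, assembling the three pieces gives $\deriv{}{z}E(z)=\frac{1}{2\pi}\int_{-\infty}^{\infty}\bigl(-\alpha(\omega)\left|Q(z,\omega)\right|^2\bigr)\diff{\omega}$, which is~\eqref{eq:energy_evolution}.
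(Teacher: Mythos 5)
Your proposal is correct and follows essentially the same route as the paper: multiply~\eqref{eq:nlse_f} by $Q^*$, integrate, observe that the dispersion term is purely imaginary, and show that the nonlinear quadruple integral is real (the paper does this by rewriting it via $\omega_3=\omega+\omega_2$ as $\tfrac{1}{4\pi^2}\int\left|Q*Q\right|^2\diff{\omega_3}$, which is the same symmetry your change-of-variables and Parseval arguments exploit). Your explicit remark about justifying differentiation under the integral sign is a technical point the paper leaves implicit.
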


\begin{proof}
	Multiplying~\eqref{eq:nlse_f} by $Q^*(z, \omega)/(2\pi)$ and integrating over $\omega$, we obtain
	\begin{multline}
	\frac{1}{2\pi}\int\limits_{-\infty}^{\infty}Q^*(z, \omega)\pderiv{}{z}Q(z, \omega)\diff\omega\\=\frac{1}{2\pi}\int\limits_{-\infty}^{\infty}\left(-\frac{\alpha(\omega)}{2}+j\frac{\beta_2}{2}\omega^2\right)\left|Q(z, \omega)\right|^2\diff\omega+j\frac{\gamma}{\pi}I_{\mathbb{R}}(z)
	\label{eq:dEdz_int}
	\end{multline}
	where, for a set $\mathcal{S}$, we defined
	\begin{align}
	I_{\mathcal{S}}(z) & = & & \frac{1}{4\pi^2}\int_{\mathcal{S}}\int_{-\infty}^{\infty}\int_{-\infty}^{\infty}Q(z, \omega_1)Q^*(z, \omega_2) \nonumber \\ & & & \cdot Q(z, \omega-\omega_1+\omega_2)Q^*(z, \omega)\diff{\omega_1}\diff{\omega_2}\diff{\omega} \nonumber \\
	& = & & \frac{1}{4\pi^2}\int_{-\infty}^{\infty}\left[Q(z, \omega_3)*Q(z, \omega_3)\right] \nonumber \\ & & & \cdot\left[Q_{\mathcal{S}}(z, \omega_3)*Q(z, \omega_3)\right]^*\diff{\omega_3}
	\end{align}
	where $*$ denotes convolution in $\omega_3$. We used $\omega_3=\omega+\omega_2$, and defined $Q_{\mathcal{S}}(z, \omega)=H_{\mathcal{S}}(\omega)Q(z, \omega)$. Taking real parts of~\eqref{eq:dEdz_int}, and using $\Imag{I_{\mathbb{R}}(z)}=0$, we obtain~\eqref{eq:energy_evolution}.
\end{proof}

\begin{theorem}
	\label{th:energy_conservation}
	Let $Q\left(z, \omega\right)\in L^2$ be the Fourier transform of a bounded signal propagating according to~\eqref{eq:nlse_f}, with attenuation profile~\eqref{eq:alpha} and $\alpha_0\ge 0$. Then the signal has no energy outside the band $\mathcal{W}$ for $z>0$:
	\begin{equation}
	E_{\overline{\mathcal{W}}}(z)=0,\quad z>0
	\label{eq:bandlimited}
	\end{equation}
	where $\overline{\mathcal{W}}$ is the complement of $\mathcal{W}$ in $\mathbb{R}$. Furthermore, the signal energy $E(z)$ evolves in $z$ according to
	\begin{equation}
	E(z)=E_{\mathcal{W}}(0)e^{-\alpha_0 z}.
	\label{eq:energy_evolution_1}
	\end{equation}
\end{theorem}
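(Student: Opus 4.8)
The plan is to recover the brick-walls dynamics as a limit of bounded attenuation profiles, so that Lemma~\ref{th:E_total_f} can be applied throughout. For $M>\alpha_0$, let $\alpha_M(\omega)=\alpha_0$ for $\omega\in\mathcal{W}$ and $\alpha_M(\omega)=M$ otherwise, and let $Q_M(z,\omega)$ solve~\eqref{eq:nlse_f} with this profile and the same launch signal, $Q_M(0,\cdot)=Q(0,\cdot)$. Denote by $E_{\mathcal{W},M}(z)$ and $E_{\overline{\mathcal{W}},M}(z)$ the in-band and out-of-band energies of $Q_M$. Since $\alpha_M<\infty$, Lemma~\ref{th:E_total_f} gives
\begin{equation}
\deriv{}{z}E_M(z)=-\alpha_0\,E_{\mathcal{W},M}(z)-M\,E_{\overline{\mathcal{W}},M}(z)\le 0,
\end{equation}
so $E_M$ is non-increasing; integrating on $[0,z]$ and using $E_M(z)\ge 0$, $E_M(0)=E(0)$ yields
\begin{equation}
\int_0^z E_{\overline{\mathcal{W}},M}(s)\diff{s}\le\frac{E(0)}{M}.
\label{eq:oob-bound}
\end{equation}

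Next I would let $M\to\infty$. Invoking continuous dependence of the solution of~\eqref{eq:nlse_f} on the attenuation profile for a bounded launch signal, $Q_M\to Q$ and hence $E_{\overline{\mathcal{W}},M}(s)\to E_{\overline{\mathcal{W}}}(s)$; since $0\le E_{\overline{\mathcal{W}},M}(s)\le E(0)$ on the finite interval $[0,z]$, dominated convergence lets us pass to the limit in~\eqref{eq:oob-bound} to obtain $\int_0^z E_{\overline{\mathcal{W}}}(s)\diff{s}=0$ for every $z>0$. As the integrand is non-negative and $E_{\overline{\mathcal{W}}}(\cdot)$ is continuous on $(0,\infty)$, this forces $E_{\overline{\mathcal{W}}}(z)=0$ for all $z>0$, which is~\eqref{eq:bandlimited}; equivalently $Q(z,\cdot)$ is supported in $\mathcal{W}$ for $z>0$. (Alternatively, one may take this band-limitedness as the defining property of a brick-walls solution, since an out-of-band component could not balance the $-\tfrac{\alpha(\omega)}{2}Q$ term in~\eqref{eq:nlse_f} when $\alpha(\omega)=\infty$.)

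With band-limitedness established, restrict to $\mathcal{W}$, where~\eqref{eq:nlse_f} carries the bounded coefficient $\alpha_0$. Repeating the computation in the proof of Lemma~\ref{th:E_total_f} with the outermost integral taken over $\mathcal{W}$ rather than $\mathbb{R}$ — multiply~\eqref{eq:nlse_f} by $Q^*(z,\omega)/(2\pi)$, integrate over $\mathcal{W}$, and take real parts — gives, for $z>0$,
\begin{equation}
\deriv{}{z}E_{\mathcal{W}}(z)=-\alpha_0\,E_{\mathcal{W}}(z)-\frac{2\gamma}{\pi}\Imag{I_{\mathcal{W}}(z)}.
\label{eq:inband-evol}
\end{equation}
Because $Q(z,\cdot)$ vanishes outside $\mathcal{W}$ for $z>0$, the factor $Q^*(z,\omega)$ in the outermost integral of $I_{\mathcal{W}}$ already confines $\omega$ to $\mathcal{W}$, so $I_{\mathcal{W}}(z)=I_{\mathbb{R}}(z)$, which is real by the same substitution/symmetry argument that gives $\Imag{I_{\mathbb{R}}(z)}=0$ in Lemma~\ref{th:E_total_f}. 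Hence~\eqref{eq:inband-evol} collapses to $\deriv{}{z}E_{\mathcal{W}}(z)=-\alpha_0 E_{\mathcal{W}}(z)$, so $E_{\mathcal{W}}(z)=E_{\mathcal{W}}(0^+)\,e^{-\alpha_0 z}$ for $z>0$. Finally $E(z)=E_{\mathcal{W}}(z)+E_{\overline{\mathcal{W}}}(z)=E_{\mathcal{W}}(z)$ for $z>0$, and continuity of the in-band energy at the launch point — the filtering removes only the out-of-band part of $Q(0,\cdot)$ — gives $E_{\mathcal{W}}(0^+)=E_{\mathcal{W}}(0)$, which yields~\eqref{eq:energy_evolution_1}.

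The main obstacle I anticipate is the limit $M\to\infty$: one needs a well-posedness and stability statement for the damped NLSE~\eqref{eq:nlse_f} under perturbations of $\alpha(\omega)$ strong enough to conclude $Q_M\to Q$ and $E_{\overline{\mathcal{W}},M}\to E_{\overline{\mathcal{W}}}$, together with the (routine) dominated-convergence step in~\eqref{eq:oob-bound}; the boundedness and integrability hypotheses on the signal are precisely what make these manipulations — and the convolution bounds needed to keep $I_{\mathcal{S}}$ finite — legitimate. A secondary point is justifying continuity of $E_{\mathcal{W}}(z)$ as $z\to 0^+$ so that $E_{\mathcal{W}}(0^+)$ may be identified with $E_{\mathcal{W}}(0)$.
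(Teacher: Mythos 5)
Your overall strategy --- approximate the brick-walls profile by a finite out-of-band attenuation $M$, control the out-of-band energy via Lemma~\ref{th:E_total_f}, then let $M\to\infty$ --- is the same as the paper's, and your integrated bound $\int_0^z E_{\overline{\mathcal{W}},M}(s)\diff{s}\le E(0)/M$ is correct. But the step where you pass to the limit contains the genuine gap: you invoke ``continuous dependence of the solution of~\eqref{eq:nlse_f} on the attenuation profile'' to get $Q_M\to Q$ and $E_{\overline{\mathcal{W}},M}\to E_{\overline{\mathcal{W}}}$, and this is precisely the hard part. The limit $M\to\infty$ is a singular limit, not a regular perturbation of a bounded coefficient; there is no a priori well-defined solution ``with $\alpha=\infty$'' for $Q_M$ to converge to, and no stability estimate is available that would give $L^2$ convergence of $Q_M(z,\cdot)$ restricted to $\overline{\mathcal{W}}$ (which is what $E_{\overline{\mathcal{W}},M}\to E_{\overline{\mathcal{W}}}$ requires). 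You flag this yourself as ``the main obstacle,'' but flagging it does not close it --- the theorem is essentially this obstacle. Moreover, your integral bound only controls the average of $E_{\overline{\mathcal{W}},M}$ over $[0,z]$, not its pointwise value, so even granting convergence you would only get $E_{\overline{\mathcal{W}}}=0$ almost everywhere without a separate continuity argument.

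The paper avoids the convergence issue entirely by producing an explicit, pointwise, $\alpha_1$-dependent bound on $E_{\overline{\mathcal{W}}}(z)$ for the finite-attenuation system. The key ingredient you are missing is the estimate on the nonlinear source term feeding the out-of-band energy: writing the out-of-band energy balance as $\deriv{}{z}E_{\overline{\mathcal{W}}}=-\alpha_1 E_{\overline{\mathcal{W}}}-\frac{\gamma}{\pi}\Imag{I_{\overline{\mathcal{W}}}}$ and then bounding, via Plancherel, Cauchy--Schwarz, and the boundedness $|q|\le q_{\max}$,
\begin{equation}
\left|I_{\overline{\mathcal{W}}}(z)\right|\le 2\pi q_{\max}^2\sqrt{E(0)}\sqrt{E_{\overline{\mathcal{W}}}(z)}.
\end{equation}
This shows the nonlinear pumping is only $\mathcal{O}\bigl(\sqrt{E_{\overline{\mathcal{W}}}}\bigr)$, so the linear damping $-\alpha_1 E_{\overline{\mathcal{W}}}$ dominates; solving the resulting comparison ODE gives a closed-form upper bound on $E_{\overline{\mathcal{W}}}(z)$ that tends to $0$ as $\alpha_1\to\infty$ for every $z>0$ --- and, crucially, so does $\alpha_1 E_{\overline{\mathcal{W}}}(z)$, which is what the paper feeds back into $\deriv{}{z}E=-\alpha_0 E_{\mathcal{W}}-\alpha_1 E_{\overline{\mathcal{W}}}$ to obtain~\eqref{eq:energy_evolution_1}. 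Your alternative route to the second claim (restricting the energy balance to $\mathcal{W}$ and observing $I_{\mathcal{W}}=I_{\mathbb{R}}$ is real once band-limitedness holds) is a reasonable workaround for that final step, but it still rests on the unestablished band-limitedness. To repair your proof, replace the appeal to continuous dependence by the differential-inequality argument above.
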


\begin{proof}
	Let $\alpha_1>0$ and consider the attenuation profile
	\begin{equation}
	\alpha(\omega)=\begin{cases}
	\alpha_0, & \omega\in\mathcal{W} \\
	\alpha_1, & \mathrm{otherwise}.
	\end{cases}
	\label{eq:alpha_finite}
	\end{equation}
	
	Substituting~\eqref{eq:alpha_finite} in~\eqref{eq:energy_evolution}, we have
	\begin{equation}
	\deriv{}{z}E(z)=-\alpha_0 E_{\mathcal{W}}(z)-\alpha_1 E_{\overline{\mathcal{W}}}(z).
	\label{eq:dEdz_full}
	\end{equation}
	
	Multiplying ~\eqref{eq:nlse_f} by $Q^*(z, \omega)/(2\pi)$, integrating over $\omega\in\overline{\mathcal{W}}$ and taking real part, we obtain
	\begin{equation}
	\deriv{}{z}E_{\overline{\mathcal{W}}}(z)=-\alpha_1 E_{\overline{\mathcal{W}}}(z)-\frac{\gamma}{\pi}\Im\left\{I_{\overline{\mathcal{W}}}(z)\right\}.
	\label{eq:dEdz_out}
	\end{equation}
	
	By assumption, the time-domain signal $q(z, t)$ is bounded. Let $\left|q(z, t)\right|\le q_{\max}$. We have
	\begin{align}
	\left|I_{\overline{\mathcal{W}}}(z)\right|&=\left|\int_{-\infty}^{\infty}\left[H_{\overline{\mathcal{W}}}(\omega)Q^*(z, \omega)\right]\left[\frac{1}{4\pi^2}\int_{-\infty}^{\infty}\int_{-\infty}^{\infty}Q(z, \omega_1) \right. \right. \nonumber \\ &  \left. \left. \vphantom{\int_{-\infty}^{\infty}} \cdot Q^*(z, \omega_2)Q(z, \omega-\omega_1+\omega_2)\diff{\omega_1}\diff{\omega_2}\right]\diff{\omega}\right| \nonumber \\
	& \stackrel{\text{(a)}}{=}  2\pi\left|\int_{-\infty}^{\infty}q_{\overline{\mathcal{W}}}^*(z, t)\left|q(z, t)\right|^2q(z, t)\diff t\right| 
	\end{align}	
	where $q_{\overline{\mathcal{W}}}(z, t)$ is the inverse Fourier transform of $Q_{\overline{\mathcal{W}}}(z, \omega)$, and (a) follows from Plancherel's formula~\cite[Eq. (2.1.2)]{titchmarsh_plancherel}
	\begin{equation}
	2\pi\int_{-\infty}^{\infty} x(t)y^*(t)\diff t=\int_{-\infty}^{\infty}X(\omega)Y^*(\omega)\diff \omega.
	\end{equation}	
	Using the standard result $\left|\left\langle x, y\right\rangle\right|\le\left\|x\right\|\left\|y\right\|$, we obtain
	\begin{align}	
	\left|I_{\overline{\mathcal{W}}}(z)\right| & \stackrel{\text{(b)}}{\le}  2\pi\sqrt{\int_{-\infty}^{\infty}|q_{\overline{\mathcal{W}}}(z, t)|^2\diff t\int_{-\infty}^{\infty}|q(z, t)|^6\diff t } \nonumber \\
	&  \le 2\pi\sqrt{\int_{-\infty}^{\infty}|q_{\overline{\mathcal{W}}}(z, t)|^2\diff t\int_{-\infty}^{\infty}q_{\max}^4|q(z, t)|^2\diff t } \nonumber \\
	&  \le 2\pi q_{\max}^2\sqrt{E(0)}\sqrt{E_{\overline{\mathcal{W}}}(z)}.
	\label{eq:I_mod}
	\end{align}
	The last inequality follows from~\eqref{eq:dEdz_full} and $\alpha(\omega)\ge 0$, so that $E(z)\le E(0)$ for $z\ge 0$. Using~\eqref{eq:I_mod} and $-\Imag{x}\le|x|$, we upper bound~\eqref{eq:dEdz_out} as
	\begin{equation}
	\deriv{}{z}E_{\overline{\mathcal{W}}}(z)\le -\alpha_1E_{\overline{\mathcal{W}}}(z)+2|\gamma|q_{\max}^2\sqrt{E(0)}\sqrt{E_{\overline{\mathcal{W}}}(z)}.
	\end{equation}	
	The solutions to a differential inequality can be bounded by the solution to the corresponding equality~\cite[p. 7]{szarski_di}. This yields
	\begin{align}
	E_{\overline{\mathcal{W}}}(z) & \le E_{\overline{\mathcal{W}}}(0)e^{-\alpha_1 z} \nonumber \\ & +\frac{4|\gamma|q_{\max}^2\sqrt{E(0)}\sqrt{E_{\overline{\mathcal{W}}}(0)}}{\alpha_1}e^{-\frac{\alpha_1}{2}z}\left(1-e^{-\frac{\alpha_1}{2}z}\right)  \nonumber \\ & +\frac{4|\gamma|^2q_{\max}^4E(0)}{\alpha_1^2}\left(1-e^{-\frac{\alpha_1^2}{2}z}\right)^2.
	\label{eq:E_out_bound}
	\end{align}
	Letting $\alpha_1\to\infty$ proves~\eqref{eq:bandlimited}, and therefore $E(z)=E_{\mathcal{W}}(z)$ for $z>0$. From~\eqref{eq:E_out_bound}, we have $\lim_{\alpha_1\to\infty} \alpha_1E_{\overline{\mathcal{W}}}(z)=0$. Using this in~\eqref{eq:dEdz_full} proves~\eqref{eq:energy_evolution_1}.
\end{proof}

Theorem~\ref{th:energy_conservation} shows that if the launch signal has energy inside $\mathcal{W}$ only,  then there is no energy loss due to the filters. The exponential energy loss is due to attenuation only.

\subsection{Energy Conservation in the Time Domain}\label{sec:conservation_time}
Consider next the time domain approach. Adding the frequency dependence of $\alpha$ to~\eqref{eq:nlse}, we obtain
\begin{align}
\pderiv{}{z}q(z, t)=&-\frac{1}{2}a(t) * q(z, t)-j\frac{\beta_2}{2}\pderivk{}{t}{2}q(z, t) \nonumber\\
&+j\gamma\left|q(z, t)\right|^2 q(z, t)
\label{eq:nlse_t}
\end{align}
where $a(t)$ is the inverse Fourier transform of $\alpha(\omega)$.

\begin{theorem}
	Let $q(z, t)\in L^2$ be a bounded signal propagating according to~\eqref{eq:nlse_t}, with brick-walls attenuation profile~\eqref{eq:alpha}. Then the signal energy
	\begin{equation}
	E(z)=\int\limits_{-\infty}^{\infty}\left|h(t)*q(z, t)\right|^2\diff t
	\end{equation}
	evolves in $z$ according to
	\begin{equation}
	E(z)=E_{\mathcal{W}}(0)e^{-\alpha_0 z},\quad z>0.
	\label{eq:energy_exp_2}
	\end{equation}
	\label{th:E_time}
\end{theorem}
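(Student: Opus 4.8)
The plan is to reduce Theorem~\ref{th:E_time} to the frequency-domain statement Theorem~\ref{th:energy_conservation} by means of Plancherel's formula. First I would take the Fourier transform of \eqref{eq:nlse_t}: the term $-\tfrac12 a(t)*q(z,t)$ transforms into $-\tfrac12\alpha(\omega)Q(z,\omega)$, the dispersion term into $j\tfrac{\beta_2}{2}\omega^2 Q(z,\omega)$, and the Kerr term into the double-convolution integral appearing in \eqref{eq:nlse_f}. Hence $Q(z,\omega)$ obeys \eqref{eq:nlse_f} with the same attenuation profile, and the hypotheses of Theorem~\ref{th:energy_conservation} (namely $Q\in L^2$ and $q(z,t)$ bounded) are inherited directly from the hypotheses here.

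The next step is to identify the time-domain energy with the in-band frequency-domain energy. Since $h=h_{\mathcal{W}}$ from \eqref{eq:h} is the impulse response of $H_{\mathcal{W}}$ in \eqref{eq:H}, the filtered signal $h(t)*q(z,t)$ has Fourier transform $H_{\mathcal{W}}(\omega)Q(z,\omega)$, and Plancherel's formula gives
\begin{equation*}
E(z)=\int_{-\infty}^{\infty}\left|h(t)*q(z,t)\right|^2\diff t=\frac{1}{2\pi}\int_{-\infty}^{\infty}\left|H_{\mathcal{W}}(\omega)\right|^2\left|Q(z,\omega)\right|^2\diff\omega=\frac{1}{2\pi}\int_{\mathcal{W}}\left|Q(z,\omega)\right|^2\diff\omega=E_{\mathcal{W}}(z),
\end{equation*}
where I used $\left|H_{\mathcal{W}}(\omega)\right|^2=H_{\mathcal{W}}(\omega)$. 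By Theorem~\ref{th:energy_conservation}, $E_{\overline{\mathcal{W}}}(z)=0$ for $z>0$, so $E_{\mathcal{W}}(z)=E_{\mathbb{R}}(z)=E_{\mathcal{W}}(0)e^{-\alpha_0 z}$ by \eqref{eq:energy_evolution_1}. Combining this with the display above yields \eqref{eq:energy_exp_2}, which completes the argument.

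The delicate point — and the one I expect to be the main obstacle — is that the brick-walls profile \eqref{eq:alpha} has $\alpha(\omega)=\infty$ off $\mathcal{W}$, so its inverse Fourier transform $a(t)$ is not an ordinary function and \eqref{eq:nlse_t} is only formal. I would deal with this exactly as in the proof of Theorem~\ref{th:energy_conservation}: first work with the finite profile \eqref{eq:alpha_finite} with $\alpha_1<\infty$, for which $a(t)$ is a genuine tempered distribution and \eqref{eq:nlse_t} is well posed, carry out the Fourier-transform and Plancherel steps there (using Lemma~\ref{th:E_total_f} for the energy evolution and the bound \eqref{eq:E_out_bound} for the out-of-band energy), and then let $\alpha_1\to\infty$. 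Crucially, defining $E(z)$ through the ideal filter $h$, as in the theorem statement, is what makes this limit clean: the filter projects out precisely the out-of-band content that vanishes as $\alpha_1\to\infty$, so no further regularity of the limiting time-domain dynamics is needed, and everything else is routine given the earlier results.
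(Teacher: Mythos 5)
Your argument is correct, but it is a genuinely different route from the paper's. You reduce Theorem~\ref{th:E_time} to Theorem~\ref{th:energy_conservation}: Fourier-transform \eqref{eq:nlse_t} to recover \eqref{eq:nlse_f}, use Plancherel together with $\left|H_{\mathcal{W}}(\omega)\right|^2=H_{\mathcal{W}}(\omega)$ to identify $\int\left|h*q\right|^2\diff t$ with $E_{\mathcal{W}}(z)$, and then invoke \eqref{eq:bandlimited} and \eqref{eq:energy_evolution_1}. That chain is valid and makes the theorem an immediate corollary of the frequency-domain result. The paper instead gives a time-domain derivation via a first-order split-step decomposition $e^{\Deltarm z\hat{D}}e^{\Deltarm z\hat{A}}e^{\Deltarm z\hat{N}}$: the dispersion step is a unitary phase multiplication, the Kerr step is a pointwise phase rotation preserving $\left|q\right|$, and the residual nonlinear contribution to $\deriv{}{z}E(z)$ collapses to $I_1+I_1^*$ with $I_1=j\gamma\int\left|q\right|^4\diff t_1$ purely imaginary, so only $-\alpha_0 E(z)$ survives. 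What your approach buys is brevity and a clean handling of the singular profile (the $\alpha_1\to\infty$ limit is already absorbed into Theorem~\ref{th:energy_conservation}); what the paper's approach buys is the mechanistic explanation of \emph{why} the nonlinearity cannot alter the energy (it is an instantaneous phase rotation) and why the filter's loss per step is second order in $\Deltarm z$, which is the physical intuition announced in Section~\ref{sec:channel}. Note that neither proof is fully independent of the frequency-domain analysis: the paper's own argument also uses the band-limitation \eqref{eq:bandlimited} when it asserts $h(t)*q(z,t)=q(z,t)$, so your reduction does not concede anything in rigor, only in insight.
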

\begin{proof}
	We use a split-step approximation. We write~\eqref{eq:nlse_t} as
	\begin{equation}
	\pderiv{}{z}q(z, t)=(\hat{D}+\hat{A}+\hat{N})q(z, t)
	\label{eq:q_z_op}
	\end{equation}
	where $\hat{N}=j\gamma\left|q(z, t)\right|^2$, $\hat{A}=-\frac{1}{2}a(t)*$, and $\hat{D}=-j\frac{\beta_2}{2}\pderivk{}{t}{2}$ are the nonlinearity, attenuation and dispersion operators, respectively. Consider a small step $\Deltarm z$. The solution of~\eqref{eq:q_z_op} is
	\begin{align}
	q(z+\Deltarm z, t)&=e^{\Deltarm z(\hat{D}+\hat{A}+\hat{N})}q(z, t) \nonumber \\ & \approx e^{\Deltarm z\hat{D}}e^{\Deltarm z\hat{A}}e^{\Deltarm z\hat{N}}q(z, t)
	\label{eq:split_step_exp}
	\end{align}
	where the approximation is valid up to first order in $\Deltarm z$~\cite[p. 48]{agrawal_nfo}. Consider the dispersion step:
	
	%
	%
	
	\begin{equation}
	q(z+\Deltarm z, t)\triangleq\hat{F}^{-1}e^{j\Deltarm z \frac{\beta_2}{2}\omega^2}\hat{F}q_a(z, t)
	\end{equation}
	where $q_a(z, t)\triangleq e^{\Deltarm z\hat{A}}e^{\Deltarm z\hat{N}}q(z, t)$, and $\hat{F}$ applies a Fourier transform. This step causes only phase shifts in the frequency domain which conserve the energy:
	\begin{equation}
	E(z+\Deltarm z)=\int\limits_{-\infty}^{\infty}\left|q(z+\Deltarm z, t)\right|^2\diff t=\int\limits_{-\infty}^{\infty}\left|q_a(z, t)\right|^2\diff t.
	\end{equation}
	
	Consider now the nonlinearity and attenuation steps:
	\begin{equation}
	q_a(z, t)=e^{-\frac{\alpha_0}{2}\Deltarm z}h(t)*\left[q(z, t)e^{j\gamma \Deltarm z \left|q(z, t)\right|^2}\right]
	\label{eq:s}
	\end{equation}
	and
	\begin{multline}
	\int\limits_{-\infty}^{\infty}\left|q_a(z, t)\right|^2\diff t= \int\limits_{-\infty}^{\infty}\int\limits_{-\infty}^{\infty}\int\limits_{-\infty}^{\infty}q(z, t_1)q^*(z, t_2) \\ \cdot e^{\Deltarm z\left[j\gamma\left(\left|q(z, t_1)\right|^2-\left|q(z, t_2)\right|^2\right)\right]} \\ \cdot e^{-\Deltarm z \alpha_0}h(t-t_1)h^*(t-t_2) \diff t_2 \diff t_1 \diff t.
	\end{multline}
	As $q(z, t)$ is band-limited to $\mathcal{W}$ for $z>0$, we have
	\begin{equation}
	E(z)=\int\limits_{-\infty}^{\infty}\left|q(z, t)\right|^2\diff t.
	\end{equation}
	The derivative of the energy is
	\begin{align}
	\deriv{}{z}E(z) & &=&\lim\limits_{\Deltarm z\rightarrow 0}\frac{E(z+\Deltarm z)-E(z)}{\Deltarm z} \nonumber \\
	& &=&\lim\limits_{\Deltarm z\rightarrow 0} \int\limits_{-\infty}^{\infty}\int\limits_{-\infty}^{\infty}\int\limits_{-\infty}^{\infty} q(z, t_1)q^*(z, t_2) \nonumber \\ & & & \cdot\frac{e^{\Deltarm z\left[-\alpha_0+j\gamma\left(\left|q(z, t_1)\right|^2-\left|q(z, t_2)\right|^2\right)\right]}-1}{\Deltarm z} \nonumber \\ & & & \cdot h(t-t_1)h^*(t-t_2) \diff t_2 \diff t_1 \diff t \nonumber \\ & &=& -\alpha_0 E(z)+I_1+I_1^* 
	\label{eq:dEdz_time_1}
	\end{align}
	where we used $\lim_{\Deltarm z\rightarrow 0} \left(e^{b\Deltarm z}-1\right)/\Deltarm z=b$ and
	\begin{multline}
	I_1=j\gamma\int\limits_{-\infty}^{\infty}\int\limits_{-\infty}^{\infty} q(z, t_1)\left|q(z, t_1)\right|^2h(t-t_1)\\ \cdot\left[\int\limits_{-\infty}^{\infty} q(z, t_2)h(t-t_2) \diff t_2 \right]^* \diff t_1 \diff t.
	\label{eq:I_1}
	\end{multline}
	
	In~\eqref{eq:dEdz_time_1} we used the boundedness of $q(z, t)$ to ensure that the integrand is bounded by an integrable function, and therefore the limit operator can be moved inside the integrals~\cite[Thm. 10.21]{browder2012mathematical}. The integral in square brackets in~\eqref{eq:I_1} is $h(t)*q(z, t)$. As $q(z, t)$ is inside the passband $\mathcal{W}$ of the brick-walls filter $h(t)$, the result of this convolution is exactly $q(z, t)$. Using $h(t)=h^*(-t)$, we obtain
	\begin{align}
	I_1 & =j\gamma\int\limits_{-\infty}^{\infty} q(z, t_1)\left|q(z, t_1)\right|^2\int\limits_{-\infty}^{\infty} h^*(t_1-t)q^*(z, t) \diff t \diff t_1 \nonumber \\ & = j\gamma\int\limits_{-\infty}^{\infty} \left|q(z, t_1)\right|^4 \diff t_1.
	\label{eq:dEdz_time_3}
	\end{align}
	Substituting~\eqref{eq:dEdz_time_3} into~\eqref{eq:dEdz_time_1} proves~\eqref{eq:energy_exp_2}.
\end{proof}

\begin{remark}
	The preservation of energy suggests that capacity should grow with power. However, the variable $E(z)$ is defined based on an infinite time interval, allowing only one channel use. We thus require a more precise result on the energy in a finite time interval to be able to make a statement about the channel capacity.
\end{remark}

\section{Energy fluctuations between channels}\label{sec:coupling}

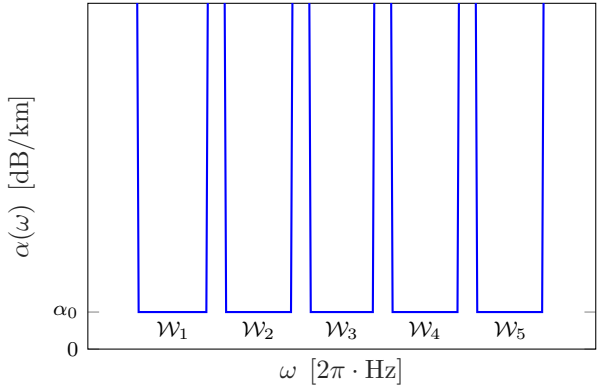
\begin{figure}[tbp]\centering
%
%
\begin{tikzpicture}

\begin{axis}[%
width=0.951\figurewidth,
height=\figureheight,
at={(0\figurewidth,0\figureheight)},
scale only axis,
xmin=-23.5619449019235,
xmax=23.5619449019235,
xtick={\empty},
xlabel style={font=\color{white!15!black}},
xlabel={$\omega\;\left[2\pi\cdot\mathrm{Hz}\right]$},
ymin=-0.06,
ymax=0.5,
ytick={-0.06, 0},
yticklabels={$0$, $\alpha_0$},
ylabel style={font=\color{white!15!black}},
ylabel={$\alpha(\omega)\;\left[\mathrm{dB}/\mathrm{km}\right]$},
axis background/.style={fill=white}
]
\addplot [color=blue, thick, forget plot]
  table[row sep=crcr]{%
-23.5619449019235	1\\
-19.1113553093379	1\\
-18.8495559215388	0\\
-12.5663706143592	0\\
-12.30457122656	1\\
-10.9955742875643	1\\
-10.7337748997651	0\\
-4.71238898038469	0\\
-4.45058959258554	1\\
-3.1415926535898	1\\
-2.87979326579065	0\\
2.87979326579065	0\\
3.1415926535898	1\\
4.45058959258554	1\\
4.71238898038469	0\\
10.7337748997651	0\\
10.9955742875643	1\\
12.30457122656	1\\
12.5663706143592	0\\
18.5877565337396	0\\
18.8495559215388	1\\
23.5619449019235	1\\
};
\node at (-15.65, 0) [anchor=north] {$\mathcal{W}_1$};
\node at (-7.72, 0) [anchor=north] {$\mathcal{W}_2$};
\node at (0, 0) [anchor=north] {$\mathcal{W}_3$};
\node at (7.72, 0) [anchor=north] {$\mathcal{W}_4$};
\node at (15.65, 0) [anchor=north] {$\mathcal{W}_5$};
\end{axis}
\end{tikzpicture}%
	\caption{Brick-walls attenuation profile.}
	\label{fig:alpha_w}
\end{figure}

\subsection{General Expression for Energy Fluctuations}
We have so far established energy conservation in a system with brick-walls attenuation profile~\eqref{eq:alpha} and $\alpha_0=0$. Figure~\ref{fig:alpha_w} shows $\alpha(\omega)$ for a $5$-channel system, where the attenuation is $\alpha_0$ in the bands $\mathcal{W}_n$, and $\infty$ outside them. We now characterize the energy \textit{per WDM channel}. Consider the signal
\begin{equation}
Q_n(z, \omega)=H_n(\omega)Q(z, \omega)
\label{eq:Q_n}
\end{equation}
in channel $\mathcal{W}_n$, where $H_n(\omega)\triangleq H_{\mathcal{W}_n}(\omega)$.

\begin{theorem}\label{th:energy_coupling}
	Let $Q(z, \omega)\in L^2$ be a bounded frequency-domain signal propagating according to~\eqref{eq:nlse_f} with the brick-walls attenuation profile~\eqref{eq:alpha}. Then the energy $E_{\mathcal{W}_n}(z)$ in channel $\mathcal{W}_n, n\inset{1}{N}$, evolves in $z$ according to
	\begin{multline}
	\deriv{}{z}E_{\mathcal{W}_n}(z)=-\alpha_0 E_{\mathcal{W}_n}(z)-\frac{\gamma}{4\pi^3}\Im\left\{ \vphantom{\int\limits_{-\infty}^{\infty}} \right. \\ \left. \int\limits_{-\infty}^{\infty}\left[Q(z, \omega_3)* Q(z, \omega_3)\right]\cdot\left[Q_n(z, \omega_3)* Q(z, \omega_3)\right]^*\diff\omega_3\vphantom{\int\limits_{-\infty}^{\infty}}\right\}
	\label{eq:energy_evolution_channel}
	\end{multline}
	where the convolutions are in $\omega_3$.
\end{theorem}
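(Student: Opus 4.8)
The plan is to run the energy-balance argument from the proof of Lemma~\ref{th:E_total_f} essentially verbatim, but integrating in frequency over the single band $\mathcal{W}_n$ instead of over all of $\mathbb{R}$. Concretely, I would multiply~\eqref{eq:nlse_f} by $Q^*(z,\omega)/(2\pi)$ and integrate over $\omega\in\mathcal{W}_n$; this produces the channel analogue of~\eqref{eq:dEdz_int}, with $I_{\mathbb{R}}(z)$ replaced by $I_{\mathcal{W}_n}(z)$, where $I_{\mathcal{S}}(z)$ is the quantity (in both of its equivalent forms) introduced in the proof of Lemma~\ref{th:E_total_f} for a general set $\mathcal{S}$.

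Next I would take real parts of that identity. On the left, $\Real{Q^*(z,\omega)\,\pderiv{}{z}Q(z,\omega)}=\tfrac12\,\pderiv{}{z}\left|Q(z,\omega)\right|^2$, and after moving $\pderiv{}{z}$ outside the $\omega$-integral over $\mathcal{W}_n$ the left side becomes $\tfrac12\,\deriv{}{z}E_{\mathcal{W}_n}(z)$. On the right, the dispersion term $j\tfrac{\beta_2}{2}\omega^2\left|Q\right|^2$ is purely imaginary and vanishes. Because $\mathcal{W}_n\subseteq\mathcal{W}$, the brick-walls profile~\eqref{eq:alpha} is identically $\alpha_0$ on $\mathcal{W}_n$, so the attenuation term contributes $-\tfrac{\alpha_0}{2}E_{\mathcal{W}_n}(z)$. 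The cubic term carries the factor $j$, so taking real parts turns it into a real multiple of $-\Imag{I_{\mathcal{W}_n}(z)}$ (the same multiple that appears, with $\mathbb{R}$ in place of $\mathcal{W}_n$, in the proof of Lemma~\ref{th:E_total_f}). Multiplying through by $2$ yields $\deriv{}{z}E_{\mathcal{W}_n}(z)=-\alpha_0 E_{\mathcal{W}_n}(z)-c\,\gamma\,\Imag{I_{\mathcal{W}_n}(z)}$ for the appropriate constant $c$.

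It then remains to substitute the convolution form of $I_{\mathcal{S}}(z)$ from the proof of Lemma~\ref{th:E_total_f}, evaluated at $\mathcal{S}=\mathcal{W}_n$, so that $Q_{\mathcal{W}_n}(z,\omega)=H_n(\omega)Q(z,\omega)=Q_n(z,\omega)$ as in~\eqref{eq:Q_n}. Collecting the constants $1/(2\pi)$, $1/(4\pi^2)$ and the factor $2$ from $\Real{Q^*\partial_z Q}$, the quantity $c\,\gamma\,\Imag{I_{\mathcal{W}_n}(z)}$ is exactly $\tfrac{\gamma}{4\pi^3}$ times the bracketed $\omega_3$-integral in~\eqref{eq:energy_evolution_channel}, which finishes the proof.

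I expect no deep obstacle here, only two technical points that are already dispatched earlier in the paper. First, the profile~\eqref{eq:alpha} is infinite outside $\mathcal{W}$, so~\eqref{eq:nlse_f} cannot be substituted literally; this is handled exactly as in Theorem~\ref{th:energy_conservation}, either by working with the truncated profile~\eqref{eq:alpha_finite} and letting $\alpha_1\to\infty$, or by invoking the conclusion $E_{\overline{\mathcal{W}}}(z)=0$ for $z>0$, which makes every frequency integral effectively supported on $\mathcal{W}$. Second, the interchange of $\deriv{}{z}$ (and of the difference-quotient limit defining it) with the $\omega$-integral over $\mathcal{W}_n$ is justified by the boundedness of $q(z,t)$ via dominated convergence, just as in the proofs of Lemma~\ref{th:E_total_f} and Theorem~\ref{th:energy_conservation}. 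The only genuine work is the constant bookkeeping described above.
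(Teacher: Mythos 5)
Your proposal is correct and matches the paper's proof, which consists of the single remark that the argument ``follows the same steps as the proof of Theorem~\ref{th:E_total_f}'' --- i.e., multiply~\eqref{eq:nlse_f} by $Q^*(z,\omega)/(2\pi)$, integrate over $\mathcal{W}_n$ instead of $\mathbb{R}$, take real parts, and identify the cubic term with the convolution form of $I_{\mathcal{W}_n}(z)$ where $Q_{\mathcal{W}_n}=Q_n$. Your two added technical remarks (handling the infinite out-of-band attenuation via Theorem~\ref{th:energy_conservation} and justifying the interchange of limit and integral) are, if anything, more explicit than what the paper provides.
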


\begin{proof}
	The proof follows the same steps as the proof of Theorem~\ref{th:E_total_f}.
	%
\end{proof}

Theorem~\ref{th:energy_coupling} implies that the energy per channel is \textit{not necessarily conserved}, even if $\alpha_0=0$. 
\subsection{Example: Three-Tone System}
To illustrate the implications of Theorem~\ref{th:energy_coupling}, suppose we use $N=3$ vanishingly thin channels (tones) with spacing $\Deltarm\omega$:
\begin{equation}
\alpha_0=0;\ \overline{\omega}_1=-\Deltarm\omega;\ \overline{\omega}_2=0;\ \overline{\omega}_3=\Deltarm\omega;\ W_n\rightarrow 0.
\end{equation}
The three-tone system is the limit of three modulated rectangular pulses when their duration $T$ goes to infinity:
\begin{align}
Q(z, \omega)= & \lim_{T\rightarrow\infty} 2\pi Q_1(z)\delta_T(\omega+\Deltarm\omega)+2 \pi Q_2(z)\delta_T(\omega) \nonumber \\ & +2\pi Q_3(z)\delta_T(\omega-\Deltarm\omega)
\label{eq:threetone}
\end{align}
where
\begin{equation}
\delta_T(\omega)=\frac{T}{2\pi}\sinc\left(\frac{T}{2\pi}\omega\right).
\end{equation}
We have $\delta_T(\omega)*\delta_T(\omega)=\delta_T(\omega)$, $\int_{-\infty}^{\infty}\left|\delta_T(\omega)\right|^2\diff\omega=T/(2\pi)$, and we compute the power of the tones to be
\begin{equation}
P_{\mathcal{W}_n}(z)=\lim_{T\rightarrow\infty} \frac{E_{\mathcal{W}_n}(z)}{T}=\left|Q_n(z)\right|^2.
\end{equation}
Substituting~\eqref{eq:threetone} in~\eqref{eq:nlse_f}, we obtain the FWM equations for the evolution of each tone. For $\omega_1$ we have
\begin{multline}
\deriv{}{z}Q_1(z)=-j\frac{\beta_2}{2}Q_1(z)+j\gamma\left[\left|Q_1(z)\right|^2Q_1(z)\right. \\  \left.+2\left(\left|Q_2(z)\right|^2+\left|Q_3(z)\right|^2\right)Q_1(z)+Q_1(z)Q_2^*(z)Q_3(z)\right].
\label{eq:fwm}
\end{multline}
Multiplying~\eqref{eq:fwm} by $Q_1^*(z)$ and taking real parts, we have
\begin{align}
\deriv{}{z}P_{\mathcal{W}_1}(z)&=-2\gamma\Im\left\{\left|Q_1(z)\right|^4+2\left|Q_2(z)\right|^2\left|Q_1(z)\right|^2\right.\nonumber\\ &\left.+2\left|Q_3(z)\right|^2\left|Q_1(z)\right|^2+Q_1^*(z)Q_2(z)^2Q_3^*(z)\right\}\nonumber\\&=-2\gamma\Imag{Q_1^*(z)Q_2(z)^2Q_3^*(z)}.
\label{eq:energy_threedeltas_1}
\end{align}
Only the FWM term $Q_1^*(z)Q_2(z)^2Q_3^*(z)$ affects the signals' powers. The self-phase modulation (SPM) term $\left|Q_1(z)\right|^4$ and the cross-phase modulation (XPM) terms $2\left|Q_2(z)\right|^2\left|Q_1(z)\right|^2$ and $2\left|Q_3(z)\right|^2\left|Q_1(z)\right|^2$ affect only the signals' phases. The power evolutions for the other channels are
\begin{align}
\deriv{}{z}P_{\mathcal{W}_2}(z)&=-4\gamma\Imag{Q_1(z)Q_2^{*}(z)^2 Q_3(z)}\nonumber \\
\deriv{}{z}P_{\mathcal{W}_3}(z)&=-2\gamma\Imag{Q_1^*(z)Q_2(z)^2Q_3^*(z)}.
\end{align}
We see that the total power of the system is conserved:
\begin{equation}
\mathrm{d}\left[P_{\mathcal{W}_1}(z)+P_{\mathcal{W}_2}(z)+P_{\mathcal{W}_3}(z)\right]/\mathrm{d}z=0.
\end{equation}

\begin{remark}
	In a two-tone system, the power in each channel is conserved. This can be shown by setting $Q_3(z)=0$. 
\end{remark}

%

\subsection{Conditions for Per-Channel Energy Conservation}
\begin{definition}
	A system with attenuation profile~\eqref{eq:alpha} is \textup{energy-decoupled} if, for any bounded launch signal $Q(0, \omega)\in L^2$ and for each channel $n$, the evolution of its energy $E_{\mathcal{W}_n}(z)$ is the same as if only that channel were present in the system:
	\begin{equation}
	E_{\mathcal{W}_n}(z)=E_{\mathcal{W}_n}(0) e^{-\alpha_0 z}\quad \forall Q(0, \omega), \forall n.
	\label{eq:no_coupling}
	\end{equation}
	Otherwise, the system is said to be \textup{energy coupled}.
\end{definition}
Equation~\eqref{eq:energy_threedeltas_1} suggests that energy coupling is caused by the FWM terms. This motivates us to try to avoid FWM in order to ensure energy decoupling, as also done in~\cite{forghieri_nofwm}. We use the standard definition of the sum of sets:
\begin{equation}
\mathcal{W}_{n_1}+\mathcal{W}_{n_2}=\left\{\omega_{n_1}+\omega_{n_2}:\omega_{n_1}\in\mathcal{W}_{n_1}, \omega_{n_2}\in\mathcal{W}_{n_2} \right\}.
\end{equation}

\begin{lemma}
	Let $\mathcal{W}$ be a multi-channel frequency band as defined in~\eqref{eq:W}. Then the NLSE system~\eqref{eq:nlse_f} with attenuation profile~\eqref{eq:alpha} is energy-decoupled~\eqref{eq:no_coupling} if and only if
	\begin{equation}
	\left(\mathcal{W}_{n_1}+\mathcal{W}_{n_2}\right)\cap \left(\mathcal{W}_{n}+\mathcal{W}_{n_3}\right)=\emptyset,\ \forall \left\{n_1,n_2\right\}\neq\left\{n, n_3\right\}.
	\label{eq:condition}
	\end{equation}
	The notation $\forall\left\{n_1, n_2\right\}\neq\left\{n, n_3\right\}$ allows any $n_1$, $n_2$, $n$ and $n_3$ in $\left\{1,\ldots,N\right\}$ such that $(n_1, n_2)\ne(n, n_3)$ and $(n_1, n_2)\ne(n_3, n)$, where $(a, b)$ denotes an ordered pair. For example,
	\begin{equation}
	(n_1, n_2)=(1, 1),\quad (n, n_3)=(1, 2)
	\end{equation}
	is allowed, but
	\begin{equation}
	(n_1, n_2)=(1, 2),\quad (n, n_3)=(2, 1)
	\end{equation}
	is not.
	
\end{lemma}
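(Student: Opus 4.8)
The plan is to characterize energy decoupling directly in terms of when the coupling integral in Theorem~\ref{th:energy_coupling} vanishes for every bounded launch signal, and then translate that analytic condition into the set-theoretic condition~\eqref{eq:condition}. From Theorem~\ref{th:energy_coupling}, the per-channel energy $E_{\mathcal{W}_n}(z)$ satisfies $\deriv{}{z}E_{\mathcal{W}_n}(z) = -\alpha_0 E_{\mathcal{W}_n}(z) - \frac{\gamma}{4\pi^3}\Im\{J_n(z)\}$ with $J_n(z) = \int [Q*Q]\cdot[Q_n * Q]^*\,\diff\omega_3$. The system is energy-decoupled if and only if $\Im\{J_n(z)\} = 0$ for all $n$, all $z$, and all bounded $Q(0,\omega)\in L^2$ — and since the launch signal is arbitrary, it suffices (and is in fact equivalent) to require this at $z=0$ for all admissible $Q$. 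So the core of the proof is: $\Im\{J_n\}\equiv 0$ for all bounded $Q\in L^2$ and all $n$ $\iff$ \eqref{eq:condition}.

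**First I would** expand $J_n$ back into its four-fold frequency integral, writing out $Q*Q$ and $Q_n * Q$ explicitly:
\begin{equation}
J_n = \int\!\!\int\!\!\int\!\!\int Q(\omega_a)Q(\omega_b)Q_n^*(\omega_c)Q^*(\omega_d)\,\delta(\omega_a+\omega_b-\omega_c-\omega_d)\,\diff\omega_a\diff\omega_b\diff\omega_c\diff\omega_d,
\end{equation}
where the $\delta$ encodes the shared convolution variable $\omega_3$. The factor $Q_n^*(\omega_c) = H_n(\omega_c)Q^*(\omega_c)$ forces $\omega_c\in\mathcal{W}_n$; splitting $Q = \sum_m Q_m$ by channel turns $J_n$ into a sum over quadruples $(n_1,n_2,n,n_3)$ of terms supported on the frequency-matching set where $\omega_a\in\mathcal{W}_{n_1}$, $\omega_b\in\mathcal{W}_{n_2}$, $\omega_c\in\mathcal{W}_n$, $\omega_d\in\mathcal{W}_{n_3}$, and $\omega_a+\omega_b = \omega_c+\omega_d$; that last constraint is exactly "$(\mathcal{W}_{n_1}+\mathcal{W}_{n_2})\cap(\mathcal{W}_n+\mathcal{W}_{n_3})\neq\emptyset$." The term with $\{n_1,n_2\}=\{n,n_3\}$ — the SPM/XPM term — contributes $\int |q_n(z,t)|^2|q(z,t)|^2\diff t$ by Plancherel (as in \eqref{eq:dEdz_time_3}), which is real, hence drops out of $\Im\{\cdot\}$; this recovers the observation already made in the three-tone example \eqref{eq:energy_threedeltas_1}. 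So $\Im\{J_n\}$ is a sum of genuine FWM contributions, one for each quadruple with $\{n_1,n_2\}\neq\{n,n_3\}$ whose frequency-matching set has positive measure.

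**The "if" direction** is then immediate: if \eqref{eq:condition} holds, every FWM frequency-matching set is empty (up to measure zero), so $\Im\{J_n\}\equiv 0$, giving \eqref{eq:no_coupling} by solving the resulting linear ODE. **For the "only if" direction**, I would argue contrapositively: suppose \eqref{eq:condition} fails for some $n$ with some forbidden quadruple, so $(\mathcal{W}_{n_1}+\mathcal{W}_{n_2})\cap(\mathcal{W}_n+\mathcal{W}_{n_3})$ has positive Lebesgue measure. I would then construct an explicit bounded launch signal — e.g. narrow tones (or narrow rectangular pulses, as in the three-tone analysis) placed at suitable frequencies inside $\mathcal{W}_{n_1}$, $\mathcal{W}_{n_2}$, $\mathcal{W}_n$, $\mathcal{W}_{n_3}$ with the frequency-matching constraint satisfied — and choose the \emph{phases} of the tone amplitudes so that the single surviving FWM term $Q_{n_1}^*Q_{n_2}^{\text{(appropriate products)}}$ has nonzero imaginary part at $z=0$; by continuity this forces $\deriv{}{z}E_{\mathcal{W}_n}(0)\neq -\alpha_0 E_{\mathcal{W}_n}(0)$, violating \eqref{eq:no_coupling}. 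Care is needed when $n_1=n_2$ or other coincidences collapse the term to something automatically real (the two-tone Remark is exactly this degeneracy), so one must check the phase-choosing argument still produces a nonreal term; the combinatorial bookkeeping of which coincidences among $\{n_1,n_2,n,n_3\}$ make a term phase-only is the one place the argument needs genuine care.

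**The main obstacle** I anticipate is the "only if" direction: making the tone-construction rigorous requires either passing to the $T\to\infty$ limit of rectangular pulses (justifying the interchange of limit and the $z$-derivative, as done via dominated convergence in Theorem~\ref{th:E_time}) or working directly with a measure-theoretic "bump function" argument on the positive-measure overlap set, and in either case one must verify that the chosen signal is bounded in time and lies in $L^2$, and that no cancellation among multiple simultaneously-active FWM terms accidentally kills $\Im\{J_n\}$ — this last point is handled by choosing the tones so that only \emph{one} forbidden quadruple is activated, which is possible precisely because the channels are disjoint intervals.
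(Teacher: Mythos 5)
Your setup and your ``if'' direction coincide with the paper's: substitute $Q=\sum_{n'}Q_{n'}$ into the coupling integral of Theorem~\ref{th:energy_coupling}, observe that the terms with $\left\{n_1,n_2\right\}=\left\{n,n_3\right\}$ (SPM/XPM) are real, and that under~\eqref{eq:condition} every remaining term is a product of convolutions with disjoint frequency support, hence zero. That half is fine.

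The gap is in the ``only if'' direction, precisely at the point you flag and then dismiss. Your plan is to place four tones at $\omega_{m_1}+\omega_{m_2}=\omega_m+\omega_{m_3}$ and choose phases so that ``the single surviving FWM term'' is nonreal, claiming that ``only one forbidden quadruple is activated'' because the channels are disjoint. This claim is unjustified and, as stated, false: once the four tones are placed, the sum~\eqref{eq:condition_sum_simpl} for $n=m$ contains $4^3=64$ terms, and many are simultaneously overlapping --- at minimum the swapped partner $I_{m_2m_1mm_3}$, all the trivially overlapping real terms $I_{mn_3mn_3}$ and $I_{n_3mmn_3}$, and possibly accidental resonances such as $2\omega_{m_1}=\omega_m+\omega_{m_2}$ among the same four frequencies, which you cannot always perturb away (the overlap set $(\mathcal{W}_{m_1}+\mathcal{W}_{m_2})\cap(\mathcal{W}_m+\mathcal{W}_{m_3})$ may be a single point, and degenerate index patterns like $m_1=m_2$ or $m_3=m$ remove degrees of freedom). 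So you cannot reduce $\Im\{J_m\}$ to one term; you must control the sign of the imaginary part of the \emph{whole} 64-term sum. The paper does this by a specific device you are missing: it assigns phase $1$ to three tones and $e^{-j\pi/8}$ to the tone in channel $m$, then enumerates the terms in five cases and shows every overlapping term has \emph{nonnegative} imaginary part, so no cancellation is possible and any strictly complex overlapping term forces $\Im\{J_m\}>0$. It also handles the residual scenario in which the only overlapping non-trivial terms at channel $m$ are real (its ``Case 3''), by observing that the corresponding term viewed from channel $n=n_3$ is then a complex overlapping term, so decoupling fails there instead. Without this combinatorial sign argument (or a rigorous genericity argument ruling out all accidental resonances in every degenerate index configuration), your contrapositive step does not go through.
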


\begin{proof}
Substituting $Q(z, \omega)=\sum_{n'=1}^{N} Q_{n'}(z,\omega)$ in~\eqref{eq:energy_evolution_channel}, the condition to prevent energy coupling is
\begin{equation}
\Im\left\{\sum_{n_1=1}^{N}\sum_{n_2=1}^{N}\sum_{n_3=1}^{N}I_{n_1n_2nn_3}\right\}=0,\quad\forall n\inset{1}{N}
\label{eq:condition_sum_simpl}
\end{equation}
where
\begin{multline}
I_{n_1n_2nn_3}\triangleq \int_{-\infty}^{\infty}\left[Q_{n_1}(z,\omega)*Q_{n_2}(z, \omega)\right] \\ \cdot\left[Q_n(z, \omega)*Q_{n_3}(z,\omega)\right]^*\diff\omega.
\end{multline}
The terms with $\left\{n_1, n_2\right\}=\left\{n, n_3\right\}$ are real-valued. If~\eqref{eq:condition} is true, then all the other terms are products of convolutions that do not overlap in frequency, which implies~\eqref{eq:no_coupling}.

To prove that~\eqref{eq:no_coupling} implies~\eqref{eq:condition}, suppose that~\eqref{eq:condition} is false, i.e., there is a choice of $\left\{m_1, m_2\right\}\neq\left\{m, m_3\right\}$ for which
\begin{equation}
\left(\mathcal{W}_{m_1}+\mathcal{W}_{m_2}\right)\cap \left(\mathcal{W}_{m}+\mathcal{W}_{m_3}\right)\neq\emptyset.
\end{equation}
Choose $\omega_{m_1}\in\mathcal{W}_{m_1}$, $\omega_{m_2}\in\mathcal{W}_{m_2}$, $\omega_{m}\in\mathcal{W}_{m}$, $\omega_{m_3}\in\mathcal{W}_{m_3}$, such that
\begin{equation}
\omega_{m_1}+\omega_{m_2}=\omega_m+\omega_{m_3}.
\end{equation}
Now consider $n=m$ in~\eqref{eq:condition_sum_simpl}, and choose the following four-tone launch signal:
\begin{equation}
Q_n(0, \omega)=\begin{cases}
\delta\left(\omega-\omega_n\right),& \substack{\mathrm{if}\ n\in\left\{m_1, m_2, m_3\right\}\\ \mathrm{and}\ n\neq m}\\
e^{-j\pi/8}\delta\left(\omega-\omega_n\right),& \mathrm{if}\ n=m \\
0, & \mathrm{otherwise}.
\end{cases}
\end{equation}
The sum~\eqref{eq:condition_sum_simpl} has $64$ terms. We call $I_{n_1n_2mn_3}$ an \textit{overlapping term} if $\omega_{n_1}+\omega_{n_2}=\omega_m+\omega_{n_3}$, and a \textit{non-overlapping term} otherwise. Non-overlapping terms are equal to $0$. We consider groups of terms $I_{n_1n_2mn_3}$ with nonnegative imaginary part.
\begin{itemize}
	\item \textit{Case 1) The $27$ terms of the form $I_{n_1n_2mn_3}$, where $n_1, n_2, n_3\neq m$}: these terms are equal to $\exp\left(j\pi/8\right)$ if they are overlapping, with positive imaginary part.
	\item \textit{Case 2) The $9$ terms of the form $I_{n_1n_2mm}$ where $n_1, n_2\neq m$ and $n_3=m$}: these terms are equal to $\exp\left(j\pi/4\right)$ if they are overlapping, with positive imaginary part.
	\item \textit{Case 3) The $18$ terms of the form $I_{mn_2mn_3}$ or $I_{n_1mmn_3}$ where $n_3\neq m$, and either $n_1=m$ or $n_2=m$}: these terms are equal to $1$ if they are overlapping.
	\item \textit{Case 4) The $9$ terms where only one of $n_1, n_2, n_3$ is not equal to $m$}: these terms form the following groups:
	\begin{equation*}
	I_{mmmn_3}+I_{mn_3mm}+I_{n_3mmm}=2I_{n_3mmm}+I_{n_3mmm}^*
	\end{equation*}
	where $I_{n_3mmm}=\exp\left(j\pi/8\right)$ if it is overlapping. The imaginary part is again positive.
	\item \textit{Case 5) The term $I_{mmmm}=1$}.
	
\end{itemize}
This list shows that all the terms of the sum~\eqref{eq:condition_sum_simpl} have either positive or zero imaginary part. If any term in Cases 1, 2 or 4 is overlapping, then the sum has positive imaginary part and the proof is complete. Case 5 is not included in~\eqref{eq:condition}. If \textit{only} terms of Case 3 are overlapping, then we have an overlapping term $I_{mn_2mn_3}$ with $n_2\neq n_3$ (if $n_2=n_3$, the term is again not included in~\eqref{eq:condition}). But then $I_{n_2mn_3m}$ is an overlapping term of Case 1, and thus~\eqref{eq:condition_sum_simpl} is not satisfied for $n=n_3$. This completes the proof: if any overlapping term exists, then the per-channel energy is not conserved. 
\end{proof}
\begin{remark}
	Condition~\eqref{eq:condition} in a system with distributed filtering ensures the absence of FWM, and thus ensures per-channel energy conservation. However, interaction between channels still occurs due to XPM.
\end{remark}

\subsection{Spectral Efficiency of an Energy-Decoupled System}\label{sec:max_efficiency}
Energy-decoupled systems have the potential to encode information in the energy of each channel and communicate without inter-channel interference. We develop an upper bound on the \textit{spectral filling efficiency} 
\begin{equation}
\eta(N)=\frac{\sum_{n=1}^{N}W_n}{\max\mathcal{W}-\min\mathcal{W}}
\label{eq:spectral_filling_eff}
\end{equation}
of an energy-decoupled system. We consider only systems with equal channel widths $W_n=W,\ \forall n$. Suppose $\overline{\omega}_1=0.5W$, and let $m_n=1+(\overline{\omega}_n-0.5W)/(2W)$, so that we have
\begin{equation}
\mathcal{W}_n=\left[(2m_n-2)W, (2m_n-1)W\right],\quad m_n\ge 1.
\label{eq:constrained_channels}
\end{equation}
From~\eqref{eq:condition} and~\eqref{eq:constrained_channels}, the condition for energy-decoupling is
\begin{equation}
\left|(m_{n_1}+m_{n_2})-(m_n+m_{n_3})\right|\ge 1,\ \forall\left\{n_1, n_2\right\}\neq\left\{n, n_3\right\}
\label{eq:sidon_constraint_real}
\end{equation}
i.e., all pairwise sums of $m_n$ differ by at least $1$. This condition was derived in~\cite{forghieri_nofwm} to ensure the absence of FWM. The case where the $m_n$ are constrained to be integers was also addressed in~\cite{forghieri_nofwm}; in this case $\left(m_n\right)$ is a Sidon sequence~\cite{Sidon1932} which satisfies
\begin{equation}
m_{n_1}+m_{n_2}\neq m_n+m_{n_3},\ \forall\left\{n_1, n_2\right\}\neq\left\{n, n_3\right\}.
\label{eq:sidon_constraint}
\end{equation}
In an energy-decoupled system, the $m_n$ in~\eqref{eq:constrained_channels} must form a sequence of real numbers satisfying~\eqref{eq:sidon_constraint_real}. We call this a $\mathbb{R}$\textit{-Sidon sequence}.

The question remains how efficiently the spectrum can be filled with this method. For integer $k$, let $N(k)$ be the length of the longest $\mathbb{R}$-Sidon sequence with elements in $\left\{1,\ldots,k\right\}$:
\begin{equation}
N(k)=\max_{(m_n)} N\ \mathrm{s.t.} \begin{cases}m_n\inset{1}{k}  \forall n\inset{1}{N}  \\
 (m_n)_{n=1}^{N}\ \mathrm{is\ a\ }\mathbb{R}\mathrm{-Sidon\ sequence}. \end{cases}
\end{equation}
\begin{lemma}
	\begin{equation}
	\limsup_{k\rightarrow \infty} \frac{N(k)}{\sqrt{k}}\le 1.
	\label{eq:sidon_real_ub}
	\end{equation}
	\label{th:real_sidon}
\end{lemma}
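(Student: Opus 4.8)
The plan is to relate the $\mathbb{R}$-Sidon condition to the classical integer-Sidon counting problem. The key observation is that an $\mathbb{R}$-Sidon sequence $(m_n)_{n=1}^N$ with elements in $[1,k]$ and all pairwise sums separated by at least $1$ forces the $\binom{N}{2}+N$ pairwise sums $m_{n_1}+m_{n_2}$ (with $n_1\le n_2$) to be points in the interval $[2,2k]$ that are pairwise $\geq 1$ apart. A slicker route, which I would take, is to discretize: define $\ell_n = \lceil m_n \rceil \in \{1,\ldots,\lceil k\rceil\}$ and argue that $(\ell_n)$ is close to an ordinary Sidon sequence, then invoke the known upper bound $B_2(k) \le \sqrt{k} + O(k^{1/4})$ for the maximum size of a Sidon set in $\{1,\ldots,k\}$ (Erdős–Turán / Lindström). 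The cleanest argument avoids rounding issues entirely, though.

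Here is the cleaner version. First I would note that the separation-by-$1$ condition on pairwise sums is equivalent to saying: for the multiset $A = \{m_1,\ldots,m_N\}$, all the sums $a+a'$ with $a,a'\in A$, $a\le a'$, are distinct \emph{and} any two of them differ by at least $1$ — but actually the $\geq 1$ part is the real content. Consider the $N(N-1)/2$ \emph{differences} $m_i - m_j$ for $i > j$; the condition $m_{n_1}+m_{n_2}\ne m_n+m_{n_3}$ for $\{n_1,n_2\}\ne\{n,n_3\}$ rearranges (for distinct indices) to $m_i - m_k \ne m_l - m_j$, i.e., all positive differences $m_i - m_j$ are distinct. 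These differences all lie in $(0, k-1]$, so $\binom{N}{2} \le$ (number of available "slots"); but over the reals there are infinitely many slots, so this alone gives nothing. The quantitative bound must come from the $\geq 1$ \emph{spacing} of sums, not mere distinctness.

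So the decisive step: the $N(N+1)/2$ pairwise sums $m_{n_1}+m_{n_2}$, $n_1\le n_2$, are distinct real numbers in $[2, 2k]$, and by~\eqref{eq:sidon_constraint_real} any two of them differ by at least $1$. Wait — that over-counts, since \eqref{eq:sidon_constraint_real} as stated compares ordered pairs $\{n_1,n_2\}\ne\{n,n_3\}$, and in particular distinct unordered pairs give sums $\geq 1$ apart. Hence these $\binom{N}{2}+N$ sums fit, with mutual spacing $\geq 1$, inside an interval of length $2k-2$, giving $\binom{N}{2}+N \le 2k-1$, i.e. $N^2/2 \lesssim 2k$, which yields only $N(k)/\sqrt k \le 2$. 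To sharpen the constant from $2$ to $1$, I would restrict attention to the "one-sided" sums: order $m_1 < m_2 < \cdots < m_N$ and look only at the $N-1$ consecutive sums $m_1+m_2 < m_2+m_3 < \cdots$ — no, that gives too few. The right refinement is the standard Sidon trick: for each $a \in A$ the $N$ translates $a + A$ are sets of $N$ reals; distinctness of all pairwise sums means the $N^2$ values $a+a'$ ($a,a'$ ordered) are "mostly" distinct, and combined with spacing $\geq 1$ one packs roughly $N^2$ points into $[2,2k]$ — but ordered pairs double-count the interval too, so one recovers $N^2 \lesssim 2k$ again. The genuine improvement to constant $1$ requires the Erdős–Turán second-moment argument: count representations $r(s) = \#\{(i,j): m_i+m_j = s\}$, use $\sum_s r(s) = N^2$ and, crucially, that the supports of $r$ for sums landing in windows of length $1$ are controlled — equivalently, apply the known asymptotic $B_2(k)\sim\sqrt k$ after a limiting/scaling argument that passes from $\mathbb{R}$-Sidon sets back to integer Sidon sets by dilating by a large integer $M$ and rounding (the rounding error $O(1)$ per element is absorbed once $M\to\infty$).

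\textbf{Main obstacle.} The hard part is precisely this last scaling-and-rounding reduction: showing that the existence of an $\mathbb{R}$-Sidon sequence of length $N$ in $[1,k]$ implies, for every $\varepsilon>0$ and $M$ large, an integer Sidon sequence of length $N$ in $\{1,\ldots, \lceil (k+\varepsilon) M\rceil\}$, so that $N \le B_2\big((k+o(1))M\big) \le \sqrt{(k+o(1))M}\big(1+o(1)\big) = \sqrt{M}\,\sqrt{k}\,(1+o(1))$ — wait, the $M$ does not cancel. The correct statement is that dilating by $M$ and rounding the $m_n$ to nearest integers $\tilde m_n = \lfloor M m_n\rceil$ produces integers in $\{1,\ldots,\lceil Mk\rceil\}$ whose pairwise sums are distinct provided $M$ exceeds the reciprocal of the minimum sum-gap (which is $\geq 1$ by the $\mathbb{R}$-Sidon hypothesis, so in fact $M \geq 2$ suffices after accounting for four rounding errors of size $\le 1/2$); then $N \le B_2(\lceil Mk\rceil) = \sqrt{Mk}(1+o_k(1))$, which still carries an unwanted $\sqrt M$. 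This tension is resolved by \emph{not} dilating: the $\mathbb{R}$-Sidon property with gaps $\geq 1$ means $(m_n)$ itself, after replacing each $m_n$ by $\lceil m_n\rceil$, is \emph{already} an integer Sidon sequence in $\{1,\ldots,\lceil k\rceil\}$ — because if $\lceil m_{n_1}\rceil + \lceil m_{n_2}\rceil = \lceil m_n\rceil + \lceil m_{n_3}\rceil$ then $|(m_{n_1}+m_{n_2}) - (m_n+m_{n_3})| < 2$, which is too weak (we need $<1$). So one instead rounds in a way that preserves the gap: since consecutive sums differ by $\geq 1$, choosing integer representatives $s_i^* = \lfloor (\text{sorted sums}) \rfloor$ keeps them strictly increasing, hence distinct, and bounded by $2\lceil k\rceil$, recovering an integer $B_2$-type set in $\{1,\ldots,2\lceil k\rceil\}$ of size $\binom N2 + N$ with all elements distinct — but that is just the packing bound again, constant $2$.

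Given the above, I expect the actual proof in the paper to simply \textbf{cite} the real-Sidon (equivalently $B_2^+[1]$ or "well-spaced") counting bound, or to run the elementary Erdős–Turán counting argument directly over $\mathbb{R}$: set $f(x) = \#\{n : m_n \le x\}$, consider $\int_{[0,k]} \big(\text{something}\big)$, and extract $\limsup N(k)/\sqrt k \le 1$ from the fact that the $N^2$ pairwise sums, each with multiplicity-one by the Sidon property, lie $\geq 1$-separated — no, $N^2$ gives constant $\sqrt 2$. The only way to the sharp constant $1$ is to use \emph{unordered} sums including diagonal, count them as $\binom N2 + N$, but then also observe they are confined to $[m_1 + m_2,\, m_{N-1}+m_N] \subseteq [2, 2k]$ of length $< 2k$, AND — here is the point I was missing — the sums involving the two extreme elements actually let one sharpen the interval: more carefully, sorting $m_1<\cdots<m_N$, the sums $m_i + m_j$ with $i\leq j$ that involve index $1$ are $m_1+m_1<m_1+m_2<\cdots<m_1+m_N$, occupying a sub-window, etc. I would therefore adopt the route of first proving the integer case constant (classical, $N(k)\le \sqrt k + O(k^{1/4})$) and then showing by a density/limiting argument that the real case cannot beat it — the real-to-integer passage being the genuine technical heart, handled by approximating an $\mathbb{R}$-Sidon sequence in $[1,k]$ by a rational one with bounded denominator $M$, clearing denominators to get an integer Sidon sequence in $\{1,\ldots,Mk\}$ of the same length $N$, whence $N \le \sqrt{Mk}(1+o(1))$; then, and this is the subtle part, note that the \emph{same} sequence scaled gives arbitrarily fine rational approximations only if we also let $k$ play its role, so one fixes the ratio $N/\sqrt k$ and pushes $M\to\infty$ along a subsequence where the approximation improves, concluding $\limsup N(k)/\sqrt k \le 1$. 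I anticipate the write-up will compress all of this into a short appeal to the known asymptotics of Sidon sets plus a one-line rational-approximation remark, and the referee-facing difficulty is making that remark rigorous.
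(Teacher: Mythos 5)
Your proposal does not reach a proof: every concrete argument you actually carry through lands at a constant of $2$ or $\sqrt{2}$ rather than $1$, and you say so yourself; the final fallback (dilate by $M$, round, invoke the integer bound for Sidon sets in $\{1,\dots,Mk\}$) leaves an uncancelled $\sqrt{M}$, which you also acknowledge; and the closing remark about fixing the ratio $N/\sqrt{k}$ and passing to a subsequence is not an argument. The specific idea you are missing is that the paper does \emph{not} reduce to the integer case at all: it runs Erd\H{o}s's sliding-window double count directly over the reals. Two observations make this work. First, applying condition~\eqref{eq:sidon_constraint_real} to the sums $m_j+m_k$ versus $m_i+m_l$ shows that any two distinct pairwise \emph{differences} $d_{i,j}=m_j-m_i$ satisfy $\left|d_{i,j}-d_{k,l}\right|\ge 1$, so each unit interval $[r,r+1)$ contains at most one difference. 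This is exactly the real-valued surrogate for ``the differences are distinct integers'' in the classical proof, and it is the step your difference-based attempt dismissed with ``over the reals there are infinitely many slots, so this alone gives nothing.''

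Second, for a window length $a$ one counts the number $N_p$ of pairs $(m_i,m_j)$, $i<j$, that fall together in some interval $[u-a,u)$ with $u\in\{1,\dots,k+a\}$. Cauchy--Schwarz on the occupancy numbers $M_u$ (each $m_i$ lies in exactly $a$ windows, so $\sum_u M_u=Na$) gives the lower bound
\begin{equation*}
N_p\ \ge\ \frac{1}{2}\,Na\left(\frac{Na}{k+a}-1\right),
\end{equation*}
while the unit separation of the differences gives the upper bound $N_p\le\sum_{r=1}^{a-1}\lceil a-d_r\rceil\le\frac{1}{2}(a+2)(a-1)$, since a pair at distance $d$ is contained in at most $\lceil a-d\rceil$ windows and at most one difference lands in each $[r,r+1)$. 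Comparing the two bounds and choosing $a=\lceil k^{3/4}\rceil$ yields $N\le\sqrt{k}+\mathcal{O}(k^{3/8})$, i.e.\ the constant $1$. None of your routes (packing the sums, rounding, dilation) can reach $1$, because they all discard the windowed second-moment information that is the actual content of the bound.
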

\begin{proof}
	Erd\"os proved~\eqref{eq:sidon_real_ub} for Sidon sequences~\cite{JLMS:JLMS0212}. We extend the proof to $\mathbb{R}$-Sidon sequences. Consider the sequence $(m_n)$, with $1\le m_1<\ldots< m_N\le k$, and which satisfies~\eqref{eq:sidon_constraint_real}. Consider a positive integer $a\le k$, and define the intervals:
	\begin{equation}
	\mathcal{I}_u\triangleq\left[u-a, u\right),\quad u\in\left\{1, 2,\ldots, k+a\right\}.
	\end{equation}
	Let $M_u$ be the number of $m_i$'s in $\mathcal{I}_u$. As each $m_i$ occurs in $a$ intervals, we have
	\begin{equation}
	\sum_{u=1}^{k+a} M_u=Na.
	\label{eq:M_u}
	\end{equation}
	Let $P_u=M_u\left(M_u-1\right)/2$ be the number of pairs $(m_i, m_j),\ i<j$ in interval $\mathcal{I}_u$. Using~\eqref{eq:M_u} and
	\begin{equation}
		\sum_{u=1}^{k+a} M_u^2\ge\frac{1}{k+a}\left(\sum_{u=1}^{k+a} M_u\right)^2
	\end{equation}
	we have
	\begin{equation}
	N_p\triangleq\sum_{u=1}^{k+a} P_u \ge \frac{1}{2}Na\left(\frac{Na}{k+a}-1\right).
	\label{eq:N_p_lb}
	\end{equation}
	For $i<j$, consider the differences $d_{i, j}=m_j-m_i$. By~\eqref{eq:sidon_constraint_real}, the distance between any two $d_{i, j}$ must be least $1$. Let $r$ be an integer. If there is one $d_{i, j}$ in $\left[r, r+1\right)$, then set $d_r=d_{i, j}$; if not, set $d_r=a$. Note that $d_r\ge r$. Let $M_{i,j}$ be the number of intervals $\mathcal{I}_u$ that contain the pair $(m_i, m_j)$. We have
	\begin{align}
	N_p & =\sum_{(i,j)\colon d_{i,j}\le a}M_{i,j} \overset{(\text{a})}{\le} \sum_{r=1}^{a-1} \left\lceil a-d_r\right\rceil \nonumber \\ &  \le \sum_{r=1}^{a-1} \left(a+1-r\right)=\frac{1}{2}(a+2)(a-1)
	\label{eq:N_p_ub}
	\end{align}
	where $(\text{a})$ holds because, for $d_{i,j}\le a$, we have $M_{i,j}\le\left\lceil a-d_{i, j} \right\rceil$. Combining~\eqref{eq:N_p_lb} with~\eqref{eq:N_p_ub}, we obtain
	\begin{equation}
	N\left(\frac{Na}{k+a}-1\right)\le\frac{(a+2)(a-1)}{a}
	\end{equation}
	which implies
	\begin{equation}
	N\le \frac{1}{2}\left(1+\frac{k}{a}\right)+\sqrt{\frac{1}{4}\left(1+\frac{k}{a}\right)^2+\frac{(a+2)(a-1)(k+a)}{a^2}}.
	\end{equation}
	Choosing $a=\left\lceil k^{3/4}\right\rceil$, we have $N=\sqrt{k}+\mathcal{O}\left(k^{3/8}\right)$, which proves the lemma.	
\end{proof}

The asymptotic bound~\eqref{eq:sidon_real_ub} is achieved by Bose's construction of Sidon sequences~\cite{bose_sidon}. When $N$ is a prime power, let $\mathbb{F}_N$ be the finite field of size $N$, and $\theta$ a generator of the extended field $\mathbb{F}_{N^2}$ (a root of an irreducible polynomial $p(x)$ of degree 2 in $\mathbb{F}_{N}$). For a set $\mathcal{S}$, let $(\mathcal{S})$ be the sequence that contains the elements of $\mathcal{S}$ in increasing order. The Bose sequence
\begin{equation}
\mathcal{M}_{\mathrm{Bose}}(N)\triangleq\left(\left\{m\inset{1}{N^2-1} \colon \theta^m-\theta\in\mathbb{F}_N \right\}\right)
\end{equation}
is a Sidon sequence. For example, for $N=11$ and $p(x)=x^2+x+7$, we choose $\theta=x$ and obtain
\begin{equation}
\mathcal{M}_{\mathrm{Bose}}(N)=\left(1, 6, 22, 62, 68, 69, 71, 88, 99, 103, 113\right).
\end{equation}
If $N$ is not a prime power, then we use the next prime power and truncate the resulting sequence. As $N(k)=\left|\mathcal{M}_{\mathrm{Bose}}(N)\right|=N$~\cite[p. 597]{sidon_constructions}, and $\max\left\{\mathcal{M}_{\mathrm{Bose}}(N)\right\}\le N^2-1$, Bose's construction achieves the upper bound~\eqref{eq:sidon_real_ub}:
\begin{equation}
\lim_{k\rightarrow \infty} \frac{N(k)}{\sqrt{k}}=\lim_{N\rightarrow \infty} \frac{N}{\sqrt{\max\mathcal{M}_\mathrm{Bose}(N)}}=1.
\label{eq:sidon_bound}
\end{equation}
\begin{remark}
	 Integer-valued Bose sequences achieve the asymptotic upper bound~\eqref{eq:sidon_real_ub}. Therefore, the $m_n$ are constrained to be integers in the following.
\end{remark}
\begin{remark}
	For small $N$, there are Sidon sequences with smaller $k$ than the Bose sequences. The optimal sequences for $N\le 16$ are listed in~\cite{shearer_golomb} and references therein. 
\end{remark}

At this point, we are ready to bound the maximum spectral filling efficiency of an energy-decoupled system.
\begin{theorem}
	Let $\mathcal{W}$ be a multi-channel frequency band, as defined in~\eqref{eq:W}, with constant channel widths $W_n=W$. Furthermore, let $\mathcal{W}$ be such that the NLSE system~\eqref{eq:nlse_f} with attenuation profile~\eqref{eq:alpha} is energy-decoupled, i.e., $\mathcal{W}$ fulfills~\eqref{eq:condition}. Then the optimal spectral filling efficiency~\eqref{eq:spectral_filling_eff} for $N$ channels belongs to $\mathcal{O}(1/N)$ and we have
	\begin{equation}
	\lim_{N\rightarrow\infty} \eta(N) N =\frac{1}{2}.
	\label{eq:efficiency}
	\end{equation}
	\label{th:efficiency}
\end{theorem}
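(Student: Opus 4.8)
The plan is to rewrite $\eta(N)$ in \eqref{eq:spectral_filling_eff} directly in terms of the parameters $m_n$ of \eqref{eq:constrained_channels}, and then read off the asymptotics from Lemma~\ref{th:real_sidon} together with the Bose construction in \eqref{eq:sidon_bound}. First I would observe that both the energy-decoupling condition \eqref{eq:sidon_constraint_real} and the span $\max_n m_n-\min_n m_n$ are invariant under adding a common constant to all the $m_n$; since the only remaining constraint is $m_n\ge 1$, we may normalize $\min_n m_n=1$ and set $k=\max_n m_n$. Then \eqref{eq:constrained_channels} gives $\min\mathcal{W}=0$ and $\max\mathcal{W}=(2k-1)W$, so $\eta(N)=NW/((2k-1)W)=N/(2k-1)$. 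Consequently, maximizing $\eta(N)$ over all energy-decoupled placements of $N$ equal-width channels amounts to minimizing $k=\max_n m_n$ over $\mathbb{R}$-Sidon sequences of length $N$ (which we take integer-valued, following the Remark preceding Theorem~\ref{th:efficiency}) normalized to start at $1$.

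Next I would connect this minimum to the counting function $N(k)$. Since a subsequence of an $\mathbb{R}$-Sidon sequence is again $\mathbb{R}$-Sidon, $N(k)$ is nondecreasing and tends to $\infty$, so $k^{\star}(N):=\min\{k:N(k)\ge N\}$ is well defined and equals the minimal $k$ of the previous step; hence the optimal efficiency is $\eta^{\star}(N)=N/(2k^{\star}(N)-1)$. To obtain the asymptotics of $k^{\star}(N)$, I would use the upper bound $N(k)\le(1+o(1))\sqrt{k}$ of Lemma~\ref{th:real_sidon} together with the matching lower bound from Bose's construction, which give $N(k)/\sqrt{k}\to 1$ as in \eqref{eq:sidon_bound}. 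Fixing $\varepsilon>0$ and applying $(1-\varepsilon)\sqrt{k}\le N(k)\le(1+\varepsilon)\sqrt{k}$ (valid for large $k$) at $k=k^{\star}(N)$ and at $k=k^{\star}(N)-1$, and invoking the minimality of $k^{\star}(N)$, yields $N^{2}/(1+\varepsilon)^{2}\le k^{\star}(N)\le 1+N^{2}/(1-\varepsilon)^{2}$ for all large $N$; letting $\varepsilon\to 0$ gives $k^{\star}(N)/N^{2}\to 1$. Substituting into $\eta^{\star}(N)N=N^{2}/(2k^{\star}(N)-1)$ proves \eqref{eq:efficiency}, and since $\eta^{\star}(N)N$ is then bounded, $\eta^{\star}(N)\in\mathcal{O}(1/N)$.

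I do not expect a genuine obstacle here: the substantive content---the bound $\limsup_{k}N(k)/\sqrt{k}\le 1$ of Lemma~\ref{th:real_sidon} and the matching Bose construction behind \eqref{eq:sidon_bound}---is already in hand. The only points requiring care are the translation argument that reduces maximizing $\eta(N)$ to minimizing the span of an $\mathbb{R}$-Sidon sequence, and the inversion of $N(k)\sim\sqrt{k}$ into $k^{\star}(N)\sim N^{2}$, where one must handle the off-by-one produced by the minimality defining $k^{\star}(N)$; both are routine.
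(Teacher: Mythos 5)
Your proof is correct and follows essentially the same route as the paper: both reduce the optimal efficiency to $N/(2k-1)$ and invoke the asymptotics $N(k)\sim\sqrt{k}$ from Lemma~\ref{th:real_sidon} together with the Bose construction in~\eqref{eq:sidon_bound}. The paper's own proof is terser---it simply evaluates $\eta(N(k))N(k)=N(k)^2/(2k-1)$ along the subsequence $N=N(k)$---whereas you make explicit the normalization of the $m_n$ and the inversion $k^{\star}(N)\sim N^{2}$ needed to take the limit over all $N$; that added care is welcome but the substance is identical.
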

\begin{proof}
	The spectral filling efficiency of~\eqref{eq:constrained_channels} is
	\begin{equation}
	\eta(N(k))=\frac{N(k)W}{(2k-1)W}.
	\label{eq:eta}
	\end{equation}
	Multiplying by $N(k)$ and using~\eqref{eq:sidon_bound}, we obtain~\eqref{eq:efficiency}.
\end{proof}
Theorem~\ref{th:efficiency} proves that a $N$-channel energy-decoupled system with uniform channel width $W$ can asymptotically fill at most a fraction $1/(2N)$ of the spectrum. A system with unequal channel widths might be able to achieve better spectral efficiency. For $N=2$ and $N=3$, standard optimization techniques show that equal width channels maximize the spectral efficiency. For $N>3$, the \textit{largest} channel width must be less than or equal to the \textit{smallest} empty space between channels (to see this, let $n_1=n$ be the largest channel, and $n_2$ and $n_3$ be the two closest channels in~\eqref{eq:condition}). Systems with equal channel widths have all widths equal to the minimum spacing. This does not guarantee optimal spectral efficiency, but leads us to believe that allowing unequal widths gives small improvements only.

\subsection{Numerical Results}\label{sec:numerical}


Consider a $5$-channel system with  dispersion coefficient, $\beta_2=-21.667\;\mathrm{ps}^2/\mathrm{km}$, nonlinear coefficient $\gamma=1.2578\;\mathrm{W}^{-1}\mathrm{km}^{-1}$, and channel bandwidth $W/(2\pi)=1\;\mathrm{GHz}$. To demonstrate the practical use of the scheme, we drop the assumption of ideal distributed filtering and use brick-walls filters~\eqref{eq:H} every $\Deltarm z=10\;\mathrm{km}$. In a first numerical experiment, the channels are placed according to the densest Sidon sequence: $\left(1, 2, 5, 10, 12\right)$, i.e., their centers are at
\begin{equation}
\overline{\omega}_n=\left(0.5W,\ 2.5W,\ 8.5W,\   18.5W,\   22.5W\right).
\end{equation}
In a second experiment, the channels are uniformly spaced in frequency in the same bandwidth of $23W$. A scaled root raised cosine pulse with roll-off factor of $\beta=0.15$ and total bandwidth $W$ is sent in each channel. The five scaling factors are chosen randomly to have the respective phases $\{-2.397, -0.217, 2.065, 2.937, 3.003\}\;\mathrm{rad}$ and pulse energies $\{0.039, 0.468, 1.469, 1.160, 1.166\} \;\mathrm{pJ}$.

Due to the finite filter spacing, some of the launch energy is lost during propagation. In the Sidon system (Figure~\ref{fig:sidon_energy}), $2.2\%$ of the energy is lost, but the energy per channel stays approximately constant. In the uniform system (Figure~\ref{fig:nosidon_energy}), the energy loss is $0.98\%$, but the energy fluctuations between channels are apparent.


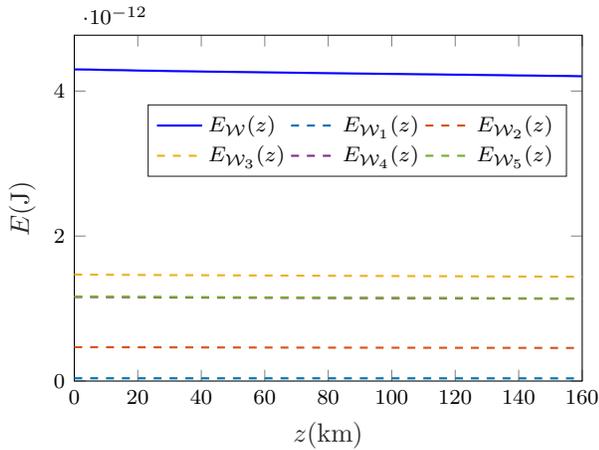
\begin{figure}[tbp]\centering
%
\definecolor{mycolor1}{rgb}{0.00000,0.44700,0.74100}%
\definecolor{mycolor2}{rgb}{0.85000,0.32500,0.09800}%
\definecolor{mycolor3}{rgb}{0.92900,0.69400,0.12500}%
\definecolor{mycolor4}{rgb}{0.49400,0.18400,0.55600}%
\definecolor{mycolor5}{rgb}{0.46600,0.67400,0.18800}%
\definecolor{mycolor6}{rgb}{0.30100,0.74500,0.93300}%
\begin{tikzpicture}

\begin{axis}[%
width=0.951\figurewidth,
height=\figureheight,
at={(0\figurewidth,0\figureheight)},
scale only axis,
xmin=0,
xmax=160,
xlabel style={font=\color{white!15!black}},
xlabel={$z (\mathrm{km})$},
ymin=0,
ymax=4.77e-12,
ylabel style={font=\color{white!15!black}},
ylabel={$E (\mathrm{J})$},
axis background/.style={fill=white},
legend columns=2,
transpose legend,
legend style={at={(0.97, 0.80)}, anchor=north east, legend cell align=left, align=left, draw=white!15!black}
]
\addplot [thick, color=blue]
table[row sep=crcr]{%
	0	4.29872892513732e-12\\
	5.16129032258065	4.29638223588023e-12\\
	10.3225806451613	4.2902845405314e-12\\
	15.4838709677419	4.28771523684408e-12\\
	20.6451612903226	4.28221732512394e-12\\
	25.8064516129032	4.2796238843372e-12\\
	30.9677419354839	4.27496806636521e-12\\
	36.1290322580645	4.27238634491443e-12\\
	41.2903225806452	4.26858824522703e-12\\
	46.4516129032258	4.26619269841968e-12\\
	51.6129032258065	4.26266896925682e-12\\
	56.7741935483871	4.25996616396034e-12\\
	61.9354838709677	4.25682478706673e-12\\
	67.0967741935484	4.25411403253117e-12\\
	72.258064516129	4.25100612307504e-12\\
	77.4193548387097	4.24801502061548e-12\\
	82.5806451612903	4.24537549012034e-12\\
	87.741935483871	4.24253938868002e-12\\
	92.9032258064516	4.24002949387984e-12\\
	98.0645161290323	4.23699345377962e-12\\
	103.225806451613	4.23490200061056e-12\\
	108.387096774194	4.23194379382594e-12\\
	113.548387096774	4.22984097375917e-12\\
	118.709677419355	4.22650875772486e-12\\
	123.870967741935	4.22468345117987e-12\\
	129.032258064516	4.22125716160391e-12\\
	134.193548387097	4.21930556360786e-12\\
	139.354838709677	4.21536080419251e-12\\
	144.516129032258	4.21371525129417e-12\\
	149.677419354839	4.20978401839027e-12\\
	154.838709677419	4.20810277939997e-12\\
	160	4.2041342694801e-12\\
};
\addlegendentry{$E_{\mathcal{W}}(z)$}

\addplot [thick, color=mycolor3, dashed]
table[row sep=crcr]{%
	0	1.46829419084934e-12\\
	5.16129032258065	1.46754488145037e-12\\
	10.3225806451613	1.46567916823899e-12\\
	15.4838709677419	1.46488920326912e-12\\
	20.6451612903226	1.46327671181621e-12\\
	25.8064516129032	1.46250009039082e-12\\
	30.9677419354839	1.46115549992109e-12\\
	36.1290322580645	1.46035931354252e-12\\
	41.2903225806452	1.45920339776407e-12\\
	46.4516129032258	1.45841296322329e-12\\
	51.6129032258065	1.45726152443274e-12\\
	56.7741935483871	1.45634552141373e-12\\
	61.9354838709677	1.45530661194919e-12\\
	67.0967741935484	1.45441685249578e-12\\
	72.258064516129	1.45342541321546e-12\\
	77.4193548387097	1.45248090157402e-12\\
	82.5806451612903	1.45165348562808e-12\\
	87.741935483871	1.4507416299044e-12\\
	92.9032258064516	1.44991989667986e-12\\
	98.0645161290323	1.44888242673803e-12\\
	103.225806451613	1.44815649472412e-12\\
	108.387096774194	1.44711726926108e-12\\
	113.548387096774	1.44638613244577e-12\\
	118.709677419355	1.44525650971618e-12\\
	123.870967741935	1.44465565003836e-12\\
	129.032258064516	1.44354999304971e-12\\
	134.193548387097	1.4429295630262e-12\\
	139.354838709677	1.44165468639845e-12\\
	144.516129032258	1.44112125306043e-12\\
	149.677419354839	1.43979371151865e-12\\
	154.838709677419	1.43922121877223e-12\\
	160	1.43782375643155e-12\\
};
\addlegendentry{$E_{\mathcal{W}_3}(z)$}

\addplot [thick, color=mycolor1, dashed]
table[row sep=crcr]{%
	0	3.8699969467276e-14\\
	5.16129032258065	3.86508489758845e-14\\
	10.3225806451613	3.8490606514462e-14\\
	15.4838709677419	3.84207409877699e-14\\
	20.6451612903226	3.82603173782708e-14\\
	25.8064516129032	3.82040865396048e-14\\
	30.9677419354839	3.81067566219707e-14\\
	36.1290322580645	3.8087893679261e-14\\
	41.2903225806452	3.80725373923485e-14\\
	46.4516129032258	3.80803906945743e-14\\
	51.6129032258065	3.80988303524105e-14\\
	56.7741935483871	3.81021805034681e-14\\
	61.9354838709677	3.809898995631e-14\\
	67.0967741935484	3.80780685646574e-14\\
	72.258064516129	3.80383649879599e-14\\
	77.4193548387097	3.79954936154405e-14\\
	82.5806451612903	3.79434894699861e-14\\
	87.741935483871	3.79017942142493e-14\\
	92.9032258064516	3.78566157261705e-14\\
	98.0645161290323	3.78220731359411e-14\\
	103.225806451613	3.77954741299531e-14\\
	108.387096774194	3.77688220300128e-14\\
	113.548387096774	3.77458533352336e-14\\
	118.709677419355	3.77063842383098e-14\\
	123.870967741935	3.76761750510455e-14\\
	129.032258064516	3.76096069289888e-14\\
	134.193548387097	3.75615956652125e-14\\
	139.354838709677	3.74569903491388e-14\\
	144.516129032258	3.74094365464439e-14\\
	149.677419354839	3.73054972040218e-14\\
	154.838709677419	3.72643476552422e-14\\
	160	3.71943552649026e-14\\
};
\addlegendentry{$E_{\mathcal{W}_1}(z)$}

\addplot [thick, color=mycolor4, dashed]
table[row sep=crcr]{%
	0	1.15892529259477e-12\\
	5.16129032258065	1.15830498286987e-12\\
	10.3225806451613	1.15665124779716e-12\\
	15.4838709677419	1.15596232270299e-12\\
	20.6451612903226	1.15445259701323e-12\\
	25.8064516129032	1.15374181881602e-12\\
	30.9677419354839	1.15245571908465e-12\\
	36.1290322580645	1.15175627894866e-12\\
	41.2903225806452	1.15073503332714e-12\\
	46.4516129032258	1.15011480882206e-12\\
	51.6129032258065	1.14920449995829e-12\\
	56.7741935483871	1.14851670606622e-12\\
	61.9354838709677	1.14770421611408e-12\\
	67.0967741935484	1.14698976776717e-12\\
	72.258064516129	1.14615827476069e-12\\
	77.4193548387097	1.14534494241748e-12\\
	82.5806451612903	1.14463565358631e-12\\
	87.741935483871	1.14388810036879e-12\\
	92.9032258064516	1.14325264876815e-12\\
	98.0645161290323	1.14251667552399e-12\\
	103.225806451613	1.14202761939264e-12\\
	108.387096774194	1.14133856274332e-12\\
	113.548387096774	1.14084781634398e-12\\
	118.709677419355	1.1400360432728e-12\\
	123.870967741935	1.1395812374457e-12\\
	129.032258064516	1.13870622223212e-12\\
	134.193548387097	1.13820691585465e-12\\
	139.354838709677	1.13722654166993e-12\\
	144.516129032258	1.136821914395e-12\\
	149.677419354839	1.13591515470644e-12\\
	154.838709677419	1.13552410435248e-12\\
	160	1.13464868658854e-12\\
};
\addlegendentry{$E_{\mathcal{W}_4}(z)$}

\addplot [thick, color=mycolor2, dashed]
table[row sep=crcr]{%
	0	4.6772058856366e-13\\
	5.16129032258065	4.67387228874959e-13\\
	10.3225806451613	4.66482066864132e-13\\
	15.4838709677419	4.66135496524853e-13\\
	20.6451612903226	4.65347619410085e-13\\
	25.8064516129032	4.65008995130274e-13\\
	30.9677419354839	4.64348728640138e-13\\
	36.1290322580645	4.64006985173897e-13\\
	41.2903225806452	4.6345676411355e-13\\
	46.4516129032258	4.63135207644424e-13\\
	51.6129032258065	4.6262686796878e-13\\
	56.7741935483871	4.62279780580008e-13\\
	61.9354838709677	4.61859429265409e-13\\
	67.0967741935484	4.61550419949978e-13\\
	72.258064516129	4.61185706400479e-13\\
	77.4193548387097	4.60881198657156e-13\\
	82.5806451612903	4.60602551447746e-13\\
	87.741935483871	4.60318129172024e-13\\
	92.9032258064516	4.60051699437611e-13\\
	98.0645161290323	4.59716644880777e-13\\
	103.225806451613	4.59474839096216e-13\\
	108.387096774194	4.5912188031845e-13\\
	113.548387096774	4.58868859033828e-13\\
	118.709677419355	4.58476314516636e-13\\
	123.870967741935	4.58267022374152e-13\\
	129.032258064516	4.57889135980323e-13\\
	134.193548387097	4.57677144277863e-13\\
	139.354838709677	4.57250104599215e-13\\
	144.516129032258	4.57065936100852e-13\\
	149.677419354839	4.56612800638124e-13\\
	154.838709677419	4.56409349921584e-13\\
	160	4.55915559120194e-13\\
};
\addlegendentry{$E_{\mathcal{W}_2}(z)$}

\addplot [thick, color=mycolor5, dashed]
table[row sep=crcr]{%
	0	1.16508888366228e-12\\
	5.16129032258065	1.16449429370915e-12\\
	10.3225806451613	1.16298145111665e-12\\
	15.4838709677419	1.16230747335935e-12\\
	20.6451612903226	1.16088007950615e-12\\
	25.8064516129032	1.16016889346049e-12\\
	30.9677419354839	1.15890136209736e-12\\
	36.1290322580645	1.15817587357009e-12\\
	41.2903225806452	1.15712051262993e-12\\
	46.4516129032258	1.15644932803534e-12\\
	51.6129032258065	1.15547724654459e-12\\
	56.7741935483871	1.15472197539692e-12\\
	61.9354838709677	1.15385553978174e-12\\
	67.0967741935484	1.15307892375358e-12\\
	72.258064516129	1.15219836371045e-12\\
	77.4193548387097	1.15131248435139e-12\\
	82.5806451612903	1.15054030998822e-12\\
	87.741935483871	1.14968973502056e-12\\
	92.9032258064516	1.14894863326805e-12\\
	98.0645161290323	1.14805563350088e-12\\
	103.225806451613	1.14744757326763e-12\\
	108.387096774194	1.14659725947307e-12\\
	113.548387096774	1.14599231260037e-12\\
	118.709677419355	1.14503350598094e-12\\
	123.870967741935	1.14450336627061e-12\\
	129.032258064516	1.14350220341276e-12\\
	134.193548387097	1.14293034478393e-12\\
	139.354838709677	1.14177248117578e-12\\
	144.516129032258	1.14129671119145e-12\\
	149.677419354839	1.14015685432303e-12\\
	154.838709677419	1.13968375869843e-12\\
	160	1.13855191207492e-12\\
};
\addlegendentry{$E_{\mathcal{W}_5}(z)$}

\end{axis}
\end{tikzpicture}%
	\caption{Energy evolution in a $5$-channel system using a Sidon sequence and filters every $10\;\mathrm{km}$.}
	\label{fig:sidon_energy}
\end{figure}

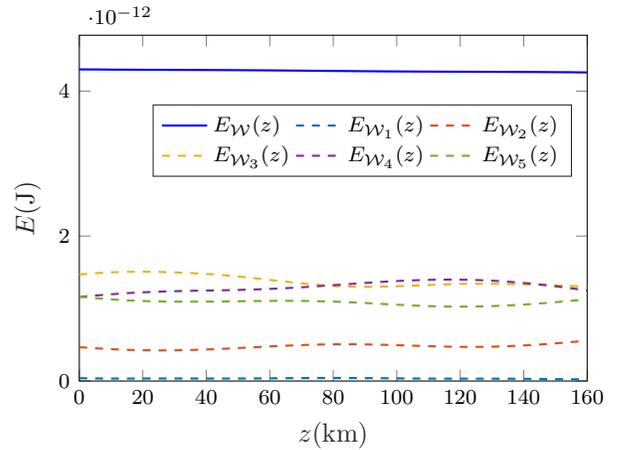
\begin{figure}[tbp]\centering
%
\definecolor{mycolor1}{rgb}{0.00000,0.44700,0.74100}%
\definecolor{mycolor2}{rgb}{0.85000,0.32500,0.09800}%
\definecolor{mycolor3}{rgb}{0.92900,0.69400,0.12500}%
\definecolor{mycolor4}{rgb}{0.49400,0.18400,0.55600}%
\definecolor{mycolor5}{rgb}{0.46600,0.67400,0.18800}%
\definecolor{mycolor6}{rgb}{0.30100,0.74500,0.93300}%
\begin{tikzpicture}

\begin{axis}[%
width=0.951\figurewidth,
height=\figureheight,
at={(0\figurewidth,0\figureheight)},
scale only axis,
xmin=0,
xmax=160,
xlabel style={font=\color{white!15!black}},
xlabel={$z (\mathrm{km})$},
ymin=0,
ymax=4.77e-12,
ylabel style={font=\color{white!15!black}},
ylabel={$E (\mathrm{J})$},
axis background/.style={fill=white},
legend columns=2,
transpose legend,
legend style={at={(0.97, 0.80)}, anchor=north east, legend cell align=left, align=left, draw=white!15!black}
]
\addplot [thick, color=blue]
table[row sep=crcr]{%
	0	4.29876845362861e-12\\
	5.16129032258065	4.29777725191981e-12\\
	10.3225806451613	4.29558414387466e-12\\
	15.4838709677419	4.29489575057348e-12\\
	20.6451612903226	4.29346841981555e-12\\
	25.8064516129032	4.29283736321334e-12\\
	30.9677419354839	4.29165678368212e-12\\
	36.1290322580645	4.29095117377231e-12\\
	41.2903225806452	4.28982203777727e-12\\
	46.4516129032258	4.28896337018894e-12\\
	51.6129032258065	4.28755203196987e-12\\
	56.7741935483871	4.28619600793063e-12\\
	61.9354838709677	4.28447082062525e-12\\
	67.0967741935484	4.28272956238454e-12\\
	72.258064516129	4.28062582751251e-12\\
	77.4193548387097	4.27846136631265e-12\\
	82.5806451612903	4.27653175622203e-12\\
	87.741935483871	4.27452531258273e-12\\
	92.9032258064516	4.27281108787336e-12\\
	98.0645161290323	4.27099374700694e-12\\
	103.225806451613	4.26984795334945e-12\\
	108.387096774194	4.26852576749414e-12\\
	113.548387096774	4.26769656714524e-12\\
	118.709677419355	4.26663738619321e-12\\
	123.870967741935	4.2661077561996e-12\\
	129.032258064516	4.26515604559497e-12\\
	134.193548387097	4.26457495189522e-12\\
	139.354838709677	4.2632041215199e-12\\
	144.516129032258	4.26251837802123e-12\\
	149.677419354839	4.26055592805749e-12\\
	154.838709677419	4.2595410628064e-12\\
	160	4.25665983072256e-12\\
};
\addlegendentry{$E_{\mathcal{W}}(z)$}

\addplot [thick, color=mycolor3, dashed]
table[row sep=crcr]{%
	0	1.46928365584481e-12\\
	5.16129032258065	1.4876624710553e-12\\
	10.3225806451613	1.49990645194029e-12\\
	15.4838709677419	1.50621542128315e-12\\
	20.6451612903226	1.50731164522674e-12\\
	25.8064516129032	1.50426652668886e-12\\
	30.9677419354839	1.49763432690021e-12\\
	36.1290322580645	1.4871636711991e-12\\
	41.2903225806452	1.47374183931884e-12\\
	46.4516129032258	1.4564867759117e-12\\
	51.6129032258065	1.43504560711695e-12\\
	56.7741935483871	1.41027507011131e-12\\
	61.9354838709677	1.38529692360742e-12\\
	67.0967741935484	1.36029002016406e-12\\
	72.258064516129	1.3369436824427e-12\\
	77.4193548387097	1.31792224657008e-12\\
	82.5806451612903	1.30561330738216e-12\\
	87.741935483871	1.29941670126724e-12\\
	92.9032258064516	1.29931173208977e-12\\
	98.0645161290323	1.30451909361871e-12\\
	103.225806451613	1.31258397699154e-12\\
	108.387096774194	1.32198145251368e-12\\
	113.548387096774	1.33095823550167e-12\\
	118.709677419355	1.33779341357908e-12\\
	123.870967741935	1.34136201084911e-12\\
	129.032258064516	1.34177810034646e-12\\
	134.193548387097	1.3393687470384e-12\\
	139.354838709677	1.33425686138729e-12\\
	144.516129032258	1.32801307533697e-12\\
	149.677419354839	1.32014221927542e-12\\
	154.838709677419	1.3113202641599e-12\\
	160	1.30131155879097e-12\\
};
\addlegendentry{$E_{\mathcal{W}_3}(z)$}

\addplot [thick, color=mycolor1, dashed]
table[row sep=crcr]{%
	0	3.85315716385207e-14\\
	5.16129032258065	3.55541928387826e-14\\
	10.3225806451613	3.41594878057226e-14\\
	15.4838709677419	3.40283060771969e-14\\
	20.6451612903226	3.44606910536412e-14\\
	25.8064516129032	3.48585132920511e-14\\
	30.9677419354839	3.49442060973627e-14\\
	36.1290322580645	3.47066765377676e-14\\
	41.2903225806452	3.44157855785084e-14\\
	46.4516129032258	3.43945403640241e-14\\
	51.6129032258065	3.49198079305584e-14\\
	56.7741935483871	3.61139383442779e-14\\
	61.9354838709677	3.77185003408576e-14\\
	67.0967741935484	3.94897760862501e-14\\
	72.258064516129	4.10493374778541e-14\\
	77.4193548387097	4.1943620427368e-14\\
	82.5806451612903	4.19276353455098e-14\\
	87.741935483871	4.10239249489955e-14\\
	92.9032258064516	3.93552691359414e-14\\
	98.0645161290323	3.7261490688093e-14\\
	103.225806451613	3.53160914200333e-14\\
	108.387096774194	3.37265813511286e-14\\
	113.548387096774	3.27130175360786e-14\\
	118.709677419355	3.23361172649675e-14\\
	123.870967741935	3.23793567086096e-14\\
	129.032258064516	3.24476591254976e-14\\
	134.193548387097	3.21376850079807e-14\\
	139.354838709677	3.10458578619272e-14\\
	144.516129032258	2.92444306461036e-14\\
	149.677419354839	2.6808357626048e-14\\
	154.838709677419	2.42944314304992e-14\\
	160	2.25218452116527e-14\\
};
\addlegendentry{$E_{\mathcal{W}_1}(z)$}

\addplot [thick, color=mycolor4, dashed]
table[row sep=crcr]{%
	0	1.15982408514161e-12\\
	5.16129032258065	1.17981364404687e-12\\
	10.3225806451613	1.19774004307605e-12\\
	15.4838709677419	1.21328907963327e-12\\
	20.6451612903226	1.22503756769757e-12\\
	25.8064516129032	1.23416710794649e-12\\
	30.9677419354839	1.24037426908472e-12\\
	36.1290322580645	1.24527605394944e-12\\
	41.2903225806452	1.24898211783493e-12\\
	46.4516129032258	1.25315257341592e-12\\
	51.6129032258065	1.25837554414135e-12\\
	56.7741935483871	1.26564181293887e-12\\
	61.9354838709677	1.27451427208836e-12\\
	67.0967741935484	1.28573802324275e-12\\
	72.258064516129	1.29931373269139e-12\\
	77.4193548387097	1.3147083142919e-12\\
	82.5806451612903	1.33016524023664e-12\\
	87.741935483871	1.34586182895592e-12\\
	92.9032258064516	1.36128775451197e-12\\
	98.0645161290323	1.37511640812563e-12\\
	103.225806451613	1.38598586885235e-12\\
	108.387096774194	1.39392034710645e-12\\
	113.548387096774	1.39855405950006e-12\\
	118.709677419355	1.39887013533995e-12\\
	123.870967741935	1.39511378691496e-12\\
	129.032258064516	1.38674435555898e-12\\
	134.193548387097	1.37371298550203e-12\\
	139.354838709677	1.35546816121266e-12\\
	144.516129032258	1.33451114293903e-12\\
	149.677419354839	1.30906912609792e-12\\
	154.838709677419	1.2803820228127e-12\\
	160	1.24937013547005e-12\\
};
\addlegendentry{$E_{\mathcal{W}_4}(z)$}

\addplot [thick, color=mycolor2, dashed]
table[row sep=crcr]{%
	0	4.67021610087987e-13\\
	5.16129032258065	4.51770326104894e-13\\
	10.3225806451613	4.38876497073613e-13\\
	15.4838709677419	4.29465633309761e-13\\
	20.6451612903226	4.23855613871316e-13\\
	25.8064516129032	4.22110176095543e-13\\
	30.9677419354839	4.23963185860633e-13\\
	36.1290322580645	4.29352046627777e-13\\
	41.2903225806452	4.37015054051403e-13\\
	46.4516129032258	4.46815916022253e-13\\
	51.6129032258065	4.58126791332812e-13\\
	56.7741935483871	4.70132628851406e-13\\
	61.9354838709677	4.81061170279224e-13\\
	67.0967741935484	4.91010021073166e-13\\
	72.258064516129	4.99198045718081e-13\\
	77.4193548387097	5.04752343322575e-13\\
	82.5806451612903	5.07089639567228e-13\\
	87.741935483871	5.06426937234199e-13\\
	92.9032258064516	5.02918146566024e-13\\
	98.0645161290323	4.96950937809095e-13\\
	103.225806451613	4.90120966536818e-13\\
	108.387096774194	4.82859728186277e-13\\
	113.548387096774	4.76381520971989e-13\\
	118.709677419355	4.71820964266942e-13\\
	123.870967741935	4.70424019110333e-13\\
	129.032258064516	4.72443419075214e-13\\
	134.193548387097	4.78568394345799e-13\\
	139.354838709677	4.88992392867554e-13\\
	144.516129032258	5.02316475973896e-13\\
	149.677419354839	5.18705503465683e-13\\
	154.838709677419	5.37234633592622e-13\\
	160	5.55899496776626e-13\\
};
\addlegendentry{$E_{\mathcal{W}_2}(z)$}

\addplot [thick, color=mycolor5, dashed]
table[row sep=crcr]{%
	0	1.16410753091568e-12\\
	5.16129032258065	1.14297661787395e-12\\
	10.3225806451613	1.12490166397899e-12\\
	15.4838709677419	1.1118973102701e-12\\
	20.6451612903226	1.10280290196628e-12\\
	25.8064516129032	1.09743503919039e-12\\
	30.9677419354839	1.0947407957392e-12\\
	36.1290322580645	1.09445272545822e-12\\
	41.2903225806452	1.09566724099359e-12\\
	46.4516129032258	1.09811356447505e-12\\
	51.6129032258065	1.10108428144821e-12\\
	56.7741935483871	1.10403255768477e-12\\
	61.9354838709677	1.1058799543094e-12\\
	67.0967741935484	1.10620172181832e-12\\
	72.258064516129	1.10412102918249e-12\\
	77.4193548387097	1.09913484170073e-12\\
	82.5806451612903	1.09173593369049e-12\\
	87.741935483871	1.08179592017637e-12\\
	92.9032258064516	1.06993818556966e-12\\
	98.0645161290323	1.05714581676541e-12\\
	103.225806451613	1.04584104954871e-12\\
	108.387096774194	1.03603765833661e-12\\
	113.548387096774	1.02908973363544e-12\\
	118.709677419355	1.02581675574227e-12\\
	123.870967741935	1.02682858261658e-12\\
	129.032258064516	1.03174251148881e-12\\
	134.193548387097	1.04078714000101e-12\\
	139.354838709677	1.05344084819047e-12\\
	144.516129032258	1.06843325312524e-12\\
	149.677419354839	1.08583072159242e-12\\
	154.838709677419	1.10630971081068e-12\\
	160	1.12755679447326e-12\\
};
\addlegendentry{$E_{\mathcal{W}_5}(z)$}

\end{axis}
\end{tikzpicture}%
	\caption{Energy evolution in a $5$-channel system with uniform channel spacing in frequency and filters every $10\;\mathrm{km}$.}
	\label{fig:nosidon_energy}
\end{figure}

\section{Conclusion}\label{sec:conclusion}
 
We have proposed a distributed filtering approach to mitigate spectral broadening in the NLSE. We have proved that the new model preserves energy, thus establishing an invariant that can be useful for communications.

We have characterized the evolution of the per-channel energy in a WDM system with distributed filtering. We have derived conditions that ensure per-channel energy conservation by using Sidon sequences. For constant channel widths, we have proved that an $N$-channel Sidon system can asymptotically fill at most a fraction $1/(2N)$ of the spectrum, which implies that spectral efficiency goes to $0$ as $N$ increases. The capacity analysis of this new band-limited channel model is an interesting open problem.

\section{Acknowledgment}
The authors would like to thank R.-J. Essiambre, A. Mecozzi, and M. Shtaif for stimulating discussions on energy conservation in fiber with bandwidth constraints.


%
%



\bibliographystyle{IEEEtran}
\bibliography{energy_conservation_r1}
%
%
%

\end{document}